\newcommand{\longversion}[1]{#1}
\newcommand{\shortversion}[1]{}
\author{{\bf Palash Dey}\\Indian Institute of Science, Bangalore\\palash@csa.iisc.ernet.in
\and
{\bf Neeldhara Misra}\\Indian Institute of Technology, Gandhinagar\\mail@neeldhara.com
}}
\date{}
\author{{\bf Palash Dey}\\Indian Institute of Science, Bangalore\\palash@csa.iisc.ernet.in
\and
{\bf Neeldhara Misra}\\Indian Institute of Technology, Gandhinagar\\mail@neeldhara.com
}
\newenvironment{proof}{\noindent{\em Proof:}}{ \hfill $\square$\\ }
\algrenewcommand\algorithmicrequire{\textbf{Input:}}
\algrenewcommand\algorithmicensure{\textbf{Output:}}
\algnewcommand{\Initialize}[1]{%
  \State \textbf{Initialize:}
  \Statex \hspace*{\algorithmicindent}\parbox[t]{.8\linewidth}{\raggedright #1}
}
\renewcommand{\Return}{\State \textbf{return} }
\newcommand{\el}{\ensuremath{\ell} }
\newcommand{\SB}{\mathbb{S}}
\newcommand{\suc}{\ensuremath{\succ} }
\newcommand{\Query}{\textsc{Query}}
\renewcommand{\ge}{\geqslant}
\renewcommand{\le}{\leqslant}
\titlespacing\section{0pt}{2pt plus 1pt minus 1pt}{2pt plus 1pt minus 1pt}
\titlespacing\subsection{0pt}{1pt plus 1pt minus 1pt}{1pt plus 1pt minus 1pt}
\titlespacing\paragraph{0pt}{1pt plus 1pt minus 1pt}{1pt plus 1pt minus 1pt}
\newcommand{\zerodisplayskips}{%
  \setlength{\abovedisplayskip}{3pt}
  \setlength{\belowdisplayskip}{3pt}
  \setlength{\abovedisplayshortskip}{3pt}
  \setlength{\belowdisplayshortskip}{3pt}}
\appto{\normalsize}{\zerodisplayskips}
\appto{\small}{\zerodisplayskips}
\appto{\footnotesize}{\zerodisplayskips}
\setlist{topsep=0pt,itemsep=-1ex,partopsep=1ex,parsep=1ex,leftmargin=0pt,itemindent=*}
\newcommand{\BigO}{\ensuremath{\mathcal{O}} }
\newcommand{\true}{\textsc{true} }
\newcommand{\false}{\textsc{false} }
\newcommand{\pr}{\ensuremath{\prime} }
\newcommand{\PE}{\textsc{Preference Elicitation} }
\renewcommand{\AA}{\ensuremath{\mathcal A} }
\newcommand{\CC}{\ensuremath{\mathcal C} }
\newcommand{\LL}{\ensuremath{\mathcal L} }
\newcommand{\OO}{\ensuremath{\mathcal O} }
\newcommand{\PP}{\ensuremath{\mathcal P} }
\newcommand{\QQ}{\ensuremath{\mathcal Q} }
\newcommand{\RR}{\ensuremath{\mathcal R} }
\newcommand{\VV}{\ensuremath{\mathcal V} }
\newcommand{\XX}{\ensuremath{\mathcal X} }
\newcommand{\qqq}{{\mathfrak q}}
\newcommand{\bCC}{\overline{\mathcal{C}} }
\newcommand{\nfrac}{\nicefrac}
\newtheorem{proposition}{\bf Proposition}
\newtheorem{observation}{\bf Observation}
\newtheorem{theorem}{\bf Theorem}
\newtheorem{lemma}{\bf Lemma}
\newtheorem{corollary}{\bf Corollary}
\newtheorem{definition}{\bf Definition}
\newcommand{\eps}{\varepsilon}
\renewcommand{\epsilon}{\eps}
\newcommand{\ignore}[1]{}
\crefname{theorem}{theorem}{\bf Theorem}
\crefname{observation}{observation}{\bf Observation}
\crefname{lemma}{lemma}{\bf Lemma}
\crefname{corollary}{corollary}{\bf Corollary}
\crefname{proposition}{proposition}{\bf Proposition}
\crefname{definition}{definition}{\bf Definition}
\crefname{claim}{claim}{\bf Claim}
\crefname{reductionrule}{reduction rule}{\bf Reduction rule}
\title{{\bf Preference Elicitation For Single Crossing Domain}}
\begin{document}

\maketitle

\begin{abstract}\vspace{-1ex}
Eliciting the preferences of a set of agents over a set of alternatives is a problem of fundamental importance in social choice theory. Prior work on this problem has studied the query complexity of preference elicitation for the unrestricted domain and for the domain of single peaked preferences. In this paper, we consider the domain of single crossing preference profiles and study the query complexity of preference elicitation under various settings. We consider two distinct situations: when an ordering of the voters with respect to which the profile is single crossing is {\em known} versus when it is {\em unknown}. We also consider different access models: when the votes can be accessed at random, as opposed to when they are coming in a pre-defined sequence. In the sequential access model, we distinguish two cases when the ordering is known: the first is that sequence in which the votes appear is also a single-crossing order, versus when it is not.  

The main contribution of our work is to provide polynomial time algorithms with low query complexity for preference elicitation in all the above six cases. Further, we show that the query complexities of our algorithms are optimal up to constant factors for all but one of the above six cases.\longversion{ We then present preference elicitation algorithms for profiles which are close to being single crossing under various notions of closeness, for example, single crossing width, minimum number of candidates\textbar voters whose deletion makes a profile single crossing.}
\end{abstract}
\section{Introduction}

Agents in multiagent systems often have individual preferences which are complete orders over a set of candidates and one would like to find an aggregate ranking or choose the {\em ``best''} candidate. Classic examples where such a scenario appears are collaborative filtering~\cite{pennock2000social}\longversion{, rank aggregation over web~\cite{dwork2001rank}} etc. In a typical such setting, we have a set of agents or voters; each of them has a preference, called a vote, over a set of candidates; and a voting rule (respectively an aggregation function) which finds a candidate (respectively an aggregated preference) called winner.

However, eliciting the preferences of the agents is a nontrivial task since we often have a large number of candidates (ranking restaurants for example) and it may often be infeasible for the agents to rank all of them. Hence it becomes important to elicit the preferences of the agents by asking them (hopefully a small number of) comparison queries only - ask an agent $i$ to compare two candidates $x$ and $y$. 

Unfortunately, it turns out that one would need ask every voter $\Omega(m\log m)$ queries to know her preference, and this argument is based on the decision-tree based lower bound on the number of comparisons for sorting an array. However, if the preferences are not completely arbitrary, but admit additional structure, then possibly we can do better. Indeed, an affirmation of this thought comes from the work of  \cite{Conitzer09}, who showed that we can elicit preferences using only $O(m)$ queries per voter for what are called \textit{single peaked preferences} (which is a well-studied restriction on preferences, and we will define the notion formally later in this section). 

There are two reasons for restricting the domain of preferences. The first is that in several application scenarios commonly considered, it is rare that votes are ad-hoc, demonstrating no patterns whatsoever. For example, the notion of single-peaked preferences forms the basis of several studies in the analytical political sciences~\cite{hinich1997analytical}. 

The second motivation for studying restricted preferences comes from the fact that they are very well-behaved from a theoretical perspective as well. The axiomatic approach of social choice involves defining certain ``properties'' that formally capture the quality of a voting rule. For example, we would not want a voting rule to be, informally speaking, a \textit{dictatorship}, which would essentially mean that it discards all but one voter's input. As it happens, a series of cornerstone results establish that it is impossible to devise voting rules which respect some of the simplest desirable properties. Celebrated results in social choice theory \cite{arrow1950difficulty,gibbard1973manipulation,satterthwaite1975strategy} show that it is impossible to have an aggregation function or a voting rule that simultaneously satisfies a desirable set of properties, for example, strategy-proofness, ontoness, non-dictatorship etc. We refer the reader to \cite{moulin1991axioms} for an excellent exposition of all key issues that arise in this context. Adding to the difficulties is the fact that many important voting rules such as Kemeny \cite{kemeny1959mathematics,levenglick1975fair}, Dodgson \cite{dodgson1876method,black1958theory}, Young \cite{young1977extending}, and Chamberlin-Courant \cite{chamberlin1983representative} are computationally intractable \cite{bartholdi1989voting,hemaspaandra2005complexity,procaccia2008complexity}.

This brings us to the second important reason for considering structured preferences --- they provide a very elegant workaround to the difficulties that we outlined above. The domains of single peaked and single crossing profiles are arguably the most important and well-studied domains among such restricted domains \cite{saporiti2006single,walsh2007uncertainty,escoffier2008single,ballester2011characterization,barbera2011top,faliszewski2009shield,bredereck2013characterization,cornaz2013kemeny,skowron2013complexity,magiera2014hard,Lackner14,faliszewski2014complexity,ElkindFLO15,elkind2014characterization,brandt2015bypassing}. A profile is called {\em single peaked} if the candidates can be arranged in a complete order \suc such that every preference $\suc^\pr$ in the profile {\em respects} the order \suc in the sense that, for every two candidates $x$ and $y$, we have $x\suc^\pr y$ whenever we have either $c\succ x\succ y$ or $y\suc x\suc c$, where $c$ is the most preferred candidate in $\suc^\pr$ \cite{black1948rationale}. On the other hand a profile is called {\em single crossing} if the voters can be arranged in a complete order $\suc$ such that for every two candidates $x$ and $y$, all the voters who prefer $x$ over $y$ appear consecutively in $\suc$ \cite{mirrlees1971exploration,roberts1977voting}.

\begin{table*}[!htbp]
 \begin{center}
 \longversion{\renewcommand{\arraystretch}{1.7}}
 \shortversion{\renewcommand{\arraystretch}{1.2}}
\resizebox{\textwidth}{!}{%
  \begin{tabular}{|c|c|c|c|}\hline
   \multirow{2}{*}{Ordering} & \multirow{2}{*}{Access model} & \multicolumn{2}{c|}{Query Complexity}\\\cline{3-4}
    & & Upper Bound & Lower Bound \\\hline\hline
   
   \multirow{3}{*}{Known} & Random & $\BigO(m^2 \log n)$ \Cref{lem:sc_random_ub} & $\Omega(m\log m + m\log n)$ \Cref{thm:sc_random_lb} \\\cline{2-4}
   
   & Sequential single crossing order & $\BigO(mn + m^2)$ \Cref{thm:sc_seq_known_ub} & \multirow{4}{*}{$\Omega(m\log m + mn)$ \Cref{thm:sc_seq_known_lb}} \\\cline{2-3}
   
   & Sequential any order & $\BigO(mn + m^2\log n)$ \Cref{thm:sc_seq_any_ub} & \\\cline{1-3}
   
   \multirow{2}{*}{Unknown} &  Sequential any order & $\BigO(mn + m^3\log m)$ \Cref{thm:sc_seq_unknown_ub} & \\\cline{2-4}
   
   & Random & $\BigO(mn + m^3\log m)$ \Cref{cor:sc_random_unknown_ub} & $\Omega(m\log m + mn)$ \Cref{thm:sc_random_unknown_lb} \\\hline
   
  \end{tabular}
}
 \end{center}
\caption{Summary of Results for preference elicitation for single crossing profiles.}\label{tbl:summary}
\end{table*}

\longversion{\subsection{Related Work}}
\shortversion{\paragraph*{Elicitation on Restricted Domains.}}~Conitzer and Sandholm show that determining whether we have enough information at any point of the elicitation process for finding a winner under some common voting rules is computationally intractable \cite{conitzer2002vote}. They also prove in their classic paper \cite{conitzer2005communication} that one would need to make $\Omega(mn\log m)$ queries even to decide the winner for many commonly used voting rules which matches with the trivial $\BigO(mn\log m)$ upper bound (based on sorting) for preference elicitation in unrestricted domain. Dey and Misra proved tight query complexity bounds for preferences single peaked on trees with respect to various tree parameters~\cite{deypeak}.

A natural question at this point is if these restricted domains allow for better elicitation algorithms as well. The answer to this is in the affirmative, and one can indeed elicit the preferences of the voters using only $\BigO(mn)$ many queries for the domain of single peaked preference profiles \cite{Conitzer09}. Our work belongs to this kind of research-- we investigate the number of queries one has to ask for preference elicitation in single crossing domains. When some partial information is available about the preferences, Ding and Lin prove interesting properties of what they call a deciding set of queries \cite{ding2013voting}. Lu and Boutilier empirically show that several heuristics often work well \cite{lu2011vote,LuB11a}.

\longversion{\subsection{Contributions}}
\shortversion{\paragraph*{Contributions.}} ~In this paper we present novel algorithms for preference elicitation for the domain of single crossing profiles in various settings.  We consider two distinct situations: when an ordering of the voters with respect to which the profile is single crossing is {\em known} versus when it is {\em unknown}. We also consider different access models: when the votes can be accessed at random, as opposed to when they are coming in a pre-defined sequence. In the sequential access model, we distinguish two cases when the ordering is known: the first is that sequence in which the votes appear is also a single-crossing order, versus when it is not.  We also prove lower bounds on the query complexity of preference elicitation for the domain of single crossing profiles; these bounds match the upper bounds up to constant factors (for a large number of voters) for all the six scenarios above except the case when we know a single crossing ordering of the voters and we have a random access to the voters; in this case, the upper and lower bounds match up to a factor of $\OO(m)$. We summarize our results in \Cref{tbl:summary}.

\section{Preliminaries}

For a positive integer $\el$, we denote the set $\{1, \ldots, \el\}$ by $[\el]$. Let $\VV = \{v_i: i\in[n]\}$ be a set of $n$ {\em voters} and $\CC = \{c_j: j\in[m]\}$ be a set of $m$ {\em candidates}. If not mentioned otherwise, we denote the set of candidates, the set of voters, the number of candidates, and the voters by \CC, \VV, $m$, and $n$ respectively. Every voter $v_i$ has a {\em preference} $\suc_i$ which is a complete order over the set \CC of candidates. We say voter $v_i$ prefers a candidate $x\in\CC$ over another candidate $y\in\CC$ if $x\suc_i y$. We denote the set of all preferences over \CC by $\LL(\CC)$. The $n$-tuple $(\suc_i)_{i\in[n]} \in\LL(\CC)^n$ of the preferences of all the voters is called a {\em profile}.  We say a candidate $x$ is {\em placed at the $k^{th}$ position} of a preference $\succ$ if $x$ is preferred over all but exactly $(k-1)$ other candidates in \suc. Let $\SB_n$ denote the set of all permutations over $[n]$ and $id_n$ be the identity permutation of $[n]$. Let an ordering $\sigma$ be $x_1\succ x_2 \succ \cdots \succ x_\el$. Then by $\overleftarrow{\sigma}$ we denote the ordering $x_\el \succ \cdots \succ x_2 \succ x_1$. \longversion{Given a subset $\XX\subseteq\CC$ of candidates and a preference \suc over \CC, we denote the restriction of \suc to \XX by $\suc(\XX)$. The restriction of a profile $\PP = (\suc_i)_{i\in[n]}$ to \XX is denoted by $\PP(\XX) = (\suc_i(\XX))_{i\in[n]}$.} All the logarithms in this paper are base $2$ unless specified otherwise.

\paragraph*{Single Crossing Domain} ~A profile $\PP = (\succ_1, \ldots, \succ_n)$ of $n$ voters over a set \CC of candidates is called a single crossing profile if there exists a permutation $\sigma\in\SB_n$ of $[n]$ such that, for every two distinct candidates $x, y\in\CC$, whenever we have $x\succ_{\sigma(i)} y$ and $x\succ_{\sigma(j)} y$ for two integers $i$ and $j$ with $1\le i< j\le n$, we have $x\succ_{\sigma(k)} y$ for every $i\le k\le j$. The following observation is immediate from the definition of single crossing profiles.

\begin{observation}
 Suppose a profile $\PP$ is single crossing with respect to an ordering $\sigma\in\SB_n$ of votes. Then \PP is single crossing with respect to the ordering $\overleftarrow{\sigma}$ too.
\end{observation}

\paragraph*{Problem Formulation} ~Suppose we have a profile \PP with $n$ voters and $m$ candidates. Let us define a function $\text{\Query}(x \succ_\el y)$ for a voter \el and two different candidates $x$ and $y$ to be \true if the voter \el prefers the candidate $x$ over the candidate $y$ and \false otherwise. We now formally define the problem.

\begin{definition}\PE\\
 Given an oracle access to \Query($\cdot$) for a single crossing profile \PP, find \PP.
\end{definition}

For two distinct candidates $x, y\in \CC$ and a voter \el, we say a \PE algorithm \AA {\em compares} candidates $x$ and $y$ for voter \el, if \AA makes a call to either $\text{\Query}(x \succ_\el y)$ or $\text{\Query}(y \succ_\el x)$. We define the number of queries made by the algorithm \AA, called the {\em query complexity} of \AA, to be the number of distinct tuples $(\el, x, y)\in \VV\times\CC\times\CC$ with $x\ne y$ such that the algorithm \AA compares the candidates $x$ and $y$ for voter \el. Notice that, even if the algorithm \AA makes multiple calls to \Query($\cdot$) with same tuple $(\el, x, y)$, we count it only once in the query complexity of \AA. This is without loss of generality since we can always implement a wrapper around the oracle which memorizes all the calls made to the oracle so far and whenever it receives a duplicate call, it replies from its memory without ``actually'' making a call to the oracle. We say two query complexities $\qqq(m,n)$ and $\qqq^\pr(m,n)$ are tight up to a factor of \el {\em for a large number of voters} if $\nfrac{1}{\el} \le \lim_{n\to\infty} \nfrac{\qqq(m,n)}{\qqq^\pr(m,n)} \le \el$.

Note that by using a standard sorting routine like merge sort, we can fully elicit an unknown preference using $\BigO(m \log m)$ queries. We state this explicitly below, as it will be useful in our subsequent discussions. 

\begin{observation}\label{obs:naive}
 There is a \PE algorithm for eliciting a single preference with query complexity $\BigO(m\log m)$.
\end{observation}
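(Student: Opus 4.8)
The plan is to observe that eliciting a single voter's preference $\succ_\ell$ is precisely the task of sorting the $m$ candidates according to the total order $\succ_\ell$, where the only primitive operation available is a pairwise comparison --- which is exactly what a call to $\Query(x \succ_\ell y)$ provides. So I would invoke any comparison-based sorting algorithm that uses $\BigO(m \log m)$ comparisons in the worst case; merge sort is the natural choice since the excerpt already names it.

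Concretely, I would run merge sort on the candidate set $\CC$, and whenever the algorithm needs to compare two candidates $x$ and $y$, it issues the query $\Query(x \succ_\ell y)$ and uses the Boolean answer to decide their relative order. Since $\succ_\ell$ is a complete order, every such comparison is well defined and consistent, so the merge sort correctly outputs the candidates in the order dictated by $\succ_\ell$, which is exactly $\succ_\ell$.

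For the query complexity, I would note that merge sort performs $\BigO(m \log m)$ comparisons to sort $m$ elements, and each comparison corresponds to at most one distinct tuple $(\ell, x, y)$; by the memoization convention discussed just before the statement, repeated comparisons are not double-counted. Hence the query complexity of this algorithm is $\BigO(m \log m)$, establishing the claim.

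There is essentially no obstacle here --- the statement is a direct translation of the classical sorting bound into the query model, and the only thing to be careful about is to point out that an oracle call to $\Query$ is the exact analogue of a comparison, so no extra queries are hidden in the bookkeeping.
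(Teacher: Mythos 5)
Your proposal is correct and matches the paper's approach exactly: the paper justifies this observation by the one-line remark that a standard comparison-based sort such as merge sort elicits a single preference with $\BigO(m\log m)$ comparisons, each comparison being one oracle query. Your write-up simply makes that same argument explicit.
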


\paragraph*{Model of Input} ~We study two models of input for \PE for single crossing profiles.
\begin{itemize}
 \item {\bf Random access to voters:} In this model, we have a set of voters and we are allowed to ask any voter to compare any two candidates at any point of time. Moreover, we are also allowed to interleave the queries to different voters. Random access to voters is the model of input for elections within an organization where every voter belongs to the organization and can be queried any time.
 \item {\bf Sequential access to voters:} In this model, voters are arriving in a sequential manner one after another to the system. Once a voter \el arrives, we can query voter \el as many times as we like and then we ``release'' the voter \el from the system to grab the next voter in the queue. Once voter \el is released, it can never be queried again. Sequential access to voters is indeed the model of input in many practical elections scenarios such as political elections, restaurant ranking etc.
\end{itemize}

\section{Results}
In this section, we present our technical results. \shortversion{In the interest of space, we skip proofs of a few results.} We first consider the (simpler) situation when the single crossing order is known, and then turn to the case when the order is unknown. In both cases, we explore all the relevant access models.

\subsection{Results: Known Single Crossing Order}

We begin with a simple \PE algorithm when we are given a random access to the voters and a single crossing ordering is known.

\begin{lemma}\label{lem:sc_random_ub}
 Suppose a profile \PP is single crossing with respect to a known permutation of the voters. Given a random access to voters, there is a \PE algorithm with query complexity $\BigO(m^2 \log n)$.
\end{lemma}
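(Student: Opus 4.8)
The plan is to exploit the known single crossing ordering directly, together with the fact that on any fixed pair of candidates a single crossing profile ``flips'' its verdict at most once. Relabel the voters so that the known permutation is $id_n$, i.e., $\PP$ is single crossing along $v_1, v_2, \ldots, v_n$. Fix an unordered pair of candidates $\{x, y\}$. By the definition of single crossing, the set $\{i \in [n] : x \succ_i y\}$ consists of consecutive voters, and, applying the definition to the pair $(y,x)$, the complementary set $\{i \in [n] : y \succ_i x\}$ is also a set of consecutive voters; two complementary intervals partitioning $[n]$ forces one of them to be a prefix $\{1, \ldots, t\}$ and the other the corresponding suffix. Equivalently, there is a single crossing index $t_{xy} \in \{0, 1, \ldots, n\}$ such that the relative order of $x$ and $y$ is the same throughout $v_1, \ldots, v_{t_{xy}}$ and is reversed throughout $v_{t_{xy}+1}, \ldots, v_n$; in particular the Boolean-valued function $i \mapsto \text{\Query}(x \succ_i y)$ is monotone in $i$.

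Given this structure, I would, for each of the $\binom{m}{2}$ pairs $\{x,y\}$, find $t_{xy}$ by binary search over the voters: first query $v_1$ and $v_n$ on $\{x,y\}$; if the two answers agree then every voter orders $x$ and $y$ the same way and this pair is settled, and otherwise run the standard binary search on $\{1, \ldots, n\}$, querying only the midpoint voter on $\{x,y\}$ at each step, to locate $t_{xy}$. This costs $\BigO(\log n)$ queries for the pair, and these are $\BigO(\log n)$ distinct tuples of the form $(\cdot, x, y)$, so summed over all pairs the query complexity is $\BigO(m^2 \log n)$, as claimed.

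It remains to output $\PP$. For a fixed voter $v_i$, the collection $\{t_{xy}\}_{x \ne y}$ determines the outcome of the comparison between every pair of candidates for $v_i$, i.e., a complete tournament on $\CC$; since $\succ_i$ is a genuine linear order this tournament is transitive, so it reconstructs $\succ_i$ uniquely (for instance by sorting the candidates according to how many other candidates they are preferred over). Repeating this for every $i \in [n]$ recovers the whole profile, and this post-processing uses no further queries.

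I do not expect a serious obstacle here; the two points that need to be stated carefully are (i) that the block of voters agreeing on a given pair is a prefix or a suffix of the ordering rather than an arbitrary interval --- this is exactly where the single crossing condition is invoked for both orientations of the pair --- and (ii) the accounting of query complexity in terms of distinct $(\el, x, y)$ tuples, noting that re-querying $v_1$ and $v_n$ for each new pair is harmless, since those are distinct tuples and the total remains $\BigO(m^2 \log n)$.
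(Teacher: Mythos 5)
Your proposal is correct and follows essentially the same approach as the paper's proof: for each of the $\binom{m}{2}$ pairs of candidates, binary search over the known single crossing order to locate the unique crossing index, at a cost of $\BigO(\log n)$ queries per pair. Your additional remarks justifying the prefix/suffix structure and the query-free reconstruction of each linear order are fine elaborations of the same argument.
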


\begin{proof}
 By renaming, we assume that the profile is single crossing with respect to the identity permutation of the votes. Now, for every ${m\choose 2}$ pair of candidates $\{x, y\}\subset\CC$, we perform a binary search over the votes to find the index $i(\{x, y\})$ where the ordering of $x$ and $y$ changes. We now know how any voter $j$ orders any two candidates $x$ and $y$ from $i(\{x, y\})$ and thus we have found $\mathcal{P}$.
\end{proof}
Interestingly, the simple algorithm in \Cref{lem:sc_random_ub} turns out to be optimal up to a multiplicative factor of $\OO(m)$ as we prove next. The idea is to ``pair up'' the candidates and design an oracle which ``hides'' the vote where the ordering of the two candidates in any pair $(x, y)$ changes unless it receives at least $(\log m - 1)$ queries involving only these two candidates $x$ and $y$. We formalize this idea below.

\begin{theorem}\label{thm:sc_random_lb}
 Suppose a profile \PP is single crossing with respect to the identity permutation of votes. Given random access to voters, any \PE algorithm has query complexity $\Omega(m\log m + m\log n)$.
\end{theorem}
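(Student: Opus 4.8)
The plan is to prove the two summands of the lower bound separately, since $\Omega(m\log m + m\log n) = \Omega(\max\{m\log m,\, m\log n\})$ up to constants. The term $\Omega(m\log m)$ is essentially immediate: restricting to any single voter, eliciting her full preference is exactly sorting an unknown linear order over $m$ candidates using comparison queries, which by the decision-tree lower bound requires $\Omega(m\log m)$ queries; since a single voter in isolation is trivially a single crossing profile (with respect to the identity permutation), this lower bound is inherited by \PE on single crossing profiles. So the real content is the $\Omega(m\log n)$ bound, and I would present the adversary argument sketched in the paragraph preceding the theorem.

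For the $\Omega(m\log n)$ bound, I would pair up the candidates into $\lfloor m/2\rfloor$ disjoint pairs $\{x_k, y_k\}$. For each pair we set up an adversarial oracle that maintains a ``live interval'' $[\ell_k, r_k] \subseteq [n]$ of voter indices such that the crossover index $i(\{x_k,y_k\})$ — the unique index where voters switch from preferring $x_k$ to preferring $y_k$ (or vice versa) — has not yet been pinned down: the oracle answers any query $\Query(x_k \succ_j y_k)$ consistently with "$x_k$ is preferred by exactly the voters $j < t$" for some $t$ still in the live interval, and on each such query it responds so as to at least halve the live interval (reporting $x_k \succ_j y_k$ if $j$ is in the left half, and the opposite if $j$ is in the right half, then updating the interval accordingly). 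Queries involving candidates from two different pairs, or involving only one candidate of a pair, are answered according to a fixed default single crossing profile and carry no information about any crossover index. Since the live interval for pair $k$ starts with size $n$ and must be reduced to size $1$ before the algorithm can be certain of $i(\{x_k,y_k\})$, and since each relevant query shrinks it by at most a factor of $2$, the algorithm must make at least $\log n - 1$ queries touching pair $k$. The key consistency check is that at every stage the partial answers can be completed to a genuine single crossing profile with respect to the identity permutation: I would verify this by exhibiting, for each pair, a preference family that flips $x_k$ and $y_k$ at any chosen index within the live interval while keeping all other pairwise relations fixed, and arguing these per-pair choices are independent so they combine into one global single crossing profile.

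Summing over the $\lfloor m/2\rfloor$ pairs, and noting that a query involving a pair $\{x_k,y_k\}$ is counted against at most one pair, any correct algorithm makes $\Omega(m\log n)$ queries against this adversary. Combining with the $\Omega(m\log m)$ bound from sorting a single vote yields $\Omega(m\log m + m\log n)$.

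The main obstacle I anticipate is the consistency argument: one has to be careful that the answers given to \emph{mixed} queries (those comparing $x_k$ with $y_\ell$ for $k \ne \ell$, or $x_k$ with $y_\ell$, etc.) together with the evolving per-pair answers never force the crossover index of some pair out of its live interval, i.e.\ that the "default" relations on non-paired comparisons and the adaptive relations on paired comparisons are jointly realizable as a single linear order per voter and jointly single crossing. The clean way to handle this is to fix once and for all a base single crossing profile in which the blocks $\{x_k, y_k\}$ are "independent" (e.g.\ arrange the $2\lfloor m/2\rfloor$ paired candidates so that within voter $j$'s preference each pair occupies a designated slot and the only freedom is the internal order of each pair), and observe that flipping the internal order of pair $k$ at voter $j$ affects only the single comparison $x_k$ vs.\ $y_k$ for that voter and preserves the single crossing property with respect to $id_n$ for every choice of flip indices $t_1,\dots,t_{\lfloor m/2\rfloor}$.
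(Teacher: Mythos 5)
Your proposal is correct and follows essentially the same adversary argument as the paper: the $\Omega(m\log m)$ term from the sorting lower bound on a single vote, and the $\Omega(m\log n)$ term from pairing the candidates, maintaining per-pair a live interval of possible crossover indices (the paper's $[\theta_i,\beta_i]$) that each within-pair query shrinks by at most a factor of two, while answering cross-pair queries from a fixed order so that the per-pair flips remain independently realizable as a single crossing profile. Your explicit treatment of the joint-realizability of the per-pair flip indices is, if anything, slightly more careful than the paper's.
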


\begin{proof}
 The $\Omega(m\log m)$ bound follows from the query complexity lower bound of sorting and the fact that any profile consisting of only one preference $\succ\in\LL(\CC)$ is single crossing. Let $\CC = \{c_1, \ldots, c_m\}$ be the set of $m$ candidates where $m$ is an even integer. Consider the ordering $Q = c_1 \succ c_2 \succ \cdots \succ c_m \in\LL(\CC)$ and the following pairing of the candidates: $\{c_1, c_2\}, \{c_3, c_4\}, \ldots, \{c_{m-1}, c_m\}$. Our oracle answers \Query($\cdot$) as follows. The oracle fixes the preferences of the voters one and $n$ to be $Q$ and $\overleftarrow{Q}$ respectively. For every odd integer $i\in[m]$, the oracle maintains $\theta_i$ (respectively $\beta_i$) which corresponds to the largest (respectively smallest) index of the voter for whom $(c_i, c_{i+1})$ has already been queried and the oracle answered that the voter prefers $c_i$ over $c_{i+1}$ ($c_{i+1}$ over $c_i$ respectively). The oracle initially sets $\theta_i = 1$ and $\beta_i = n$ for every odd integer $i\in[m]$. Suppose oracle receives a query to compare candidates $c_i$ and $c_j$ for $i,j\in[m]$ with $i<j$ for a voter $\ell$. If $i$ is an even integer or $j-i\ge 2$ (that is, $c_i$ and $c_j$ belong to different pairs), then the oracle answers that the voter \el prefers $c_i$ over $c_j$. Otherwise we have $j=i+1$ and $i$ is an odd integer. The oracle answers the query to be $c_i \succ c_{i+1}$ and updates $\theta_i$ to \el keeping $\beta_i$ fixed if $|\el-\theta_i| \le |\el-\beta_i|$ and otherwise answers $c_{i+1} \succ c_i$ and updates $\beta_i$ to \el keeping $\theta_i$ fixed (that is, the oracle answers according to the vote which is closer to the voter \el between $\theta_i$ and $\beta_i$ and updates $\theta_i$ or $\beta_i$ accordingly). If the pair $(c_i, c_{i+1})$ is queried less than $(\log n - 2)$ times, then we have $\beta_i - \theta_i \ge 2$ at the end of the algorithm since every query for the pair $(c_i, c_{i+1})$ decreases $\beta_i - \theta_i$ by at most a factor of two and we started with $\beta_i - \theta_i = n-1$. Consider a voter $\kappa$ with $\theta_i < \kappa < \beta_i$. If the elicitation algorithm outputs that the voter $\kappa$ prefers $c_i$ over $c_{i+1}$ (respectively $c_{i+1}$ over $c_i$), then the oracle sets all the voters $\kappa^\pr$ with $\theta_i < \kappa^\pr < \beta_i$ to prefer $c_{i+1}$ over $c_i$ (respectively $c_i$ over $c_{i+1}$). Clearly, the algorithm does not elicit the preference of the voter $\kappa$ correctly. Also, the profile is single crossing with respect to the identity permutation of the voters and consistent with the answers of all the queries made by the algorithm. Hence, for every odd integer $i\in[m]$, the algorithm must make at least $(\log n - 1)$ queries for the pair $(c_i, c_{i+1})$ thereby making $\Omega(m\log n)$ queries in total.
\end{proof}

We now present our \PE algorithm when we have a sequential access to the voters according to a single crossing order. We elicit the preference of the first voter using \Cref{obs:naive}. From second vote onwards, we simply use the idea of {\em insertion sort} relative to the previously elicited vote \cite{cormen2009introduction}. Since we are using insertion sort, any particular voter may be queried $\OO(m^2)$ times. However, we are able to bound the query complexity of our algorithm due to two fundamental reasons: (i) consecutive preferences will often be almost similar in a single crossing ordering, (ii) our algorithm takes only $\BigO(m)$ queries to elicit the preference of the current voter if its preference is indeed the same as the preference of the voter preceding it. In other words, every time we have to ``pay'' for shifting a candidate further back in the current vote, the relative ordering of that candidate with all the candidates that it jumped over is now fixed, because for these pairs, the one permitted crossing is now used up. 
We begin with presenting an important subroutine called Elicit($\cdot$) which finds the preference of a voter \el given another preference \RR by performing an insertion sort using \RR as the order of insertion.

\begin{algorithm}[ht]
 \caption{Elicit(\CC, \RR, \el)}\label{alg:elicit}
\begin{algorithmic}[1]
 \Require{A set of candidates $\CC = \{c_i : i\in[m]\}$, an ordering $\RR = c_1 \succ \cdots \succ c_m$ of \CC, a voter \el}
 \Ensure{Preference ordering $\succ_\el$ of voter \el on \CC}
 \State $\QQ\leftarrow c_1$ \Comment{\QQ will be the preference of the voter \el}
 \For{$i \gets 2 \textrm{ to } m$}\label{elicit_for}\Comment{$c_i$ is inserted in the $i^{th}$ iteration}
  \State Scan \QQ linearly {\em from index $i-1$ to $1$} to find the index $j$ where $c_i$ should be inserted according to the preference of voter \el and insert $c_i$ in \QQ at $j$\label{elicit_scan}
 \EndFor
 \Return \QQ
\end{algorithmic}
\end{algorithm}

For the sake of the analysis of our algorithm, let us to introduce a few terminologies. Given two preferences $\succ_1$ and $\succ_2$, we call a pair of candidates $(x, y)\in\CC\times\CC, x\ne y,$ {\em good} if both $\succ_1$ and $\succ_2$ order them in a same way; a pair of candidates is called {\em bad} if it is not good. We divide the number of queries made by our algorithm into two parts: goodCost($\cdot$) and badCost($\cdot$) which are the number of queries made between good and respectively bad pair of candidates. In what follows, we show that goodCost($\cdot$) for Elicit($\cdot$) is small and the total badCost($\cdot$) across all the runs of Elicit($\cdot$) is small. 

\begin{lemma}\label{lem:good}
 The goodCost(Elicit($\CC, \RR, \el$)) of Elicit($\CC, \RR, \el$) is $\OO(m)$ (good is with respect to the preferences \RR and $\succ_\el$).
\end{lemma}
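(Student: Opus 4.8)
The plan is to charge the good-pair queries made during the $i$-th iteration of the \textbf{for} loop (line~\ref{elicit_for}) of Elicit, i.e.\ during the insertion of $c_i$, against the single ``stopping'' comparison of that iteration. First I would fix the following consequence of the insertion order: since $\RR$ is the ordering $c_1 \succ \cdots \succ c_m$, for every $k < i$ the candidate $c_k$ is preferred over $c_i$ in $\RR$. Hence, for $k < i$, the pair $(c_i, c_k)$ is good (with respect to $\RR$ and $\succ_\el$) exactly when $c_k \succ_\el c_i$, and bad exactly when $c_i \succ_\el c_k$.

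Next I would make explicit the loop invariant that at the start of iteration $i$ the list $\QQ$ holds $c_1, \dots, c_{i-1}$ sorted according to $\succ_\el$; this is immediate by induction, since line~\ref{elicit_scan} inserts $c_i$ into its correct position relative to $\succ_\el$. Given this, I would describe the scan of line~\ref{elicit_scan} precisely: starting from the bottom of $\QQ$ it queries $c_i$ against the current candidate, moves up past a candidate $z$ precisely when $c_i \succ_\el z$, and halts at the first candidate $z^\star$ (counting from the bottom) for which $z^\star \succ_\el c_i$, placing $c_i$ immediately below $z^\star$; if no such $z^\star$ exists it places $c_i$ at the top. Because $\QQ$ is already $\succ_\el$-sorted, halting at the first such $z^\star$ is indeed correct, so this is exactly what the algorithm does, and in particular the scan makes no further query after the one against $z^\star$.

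Putting the two observations together: in iteration $i$, every candidate the scan moves past is ranked by $\el$ below $c_i$, so by the first observation that comparison is between a bad pair; the only possible remaining comparison is the one against $z^\star$, which is ranked by $\el$ above $c_i$ and is therefore between a good pair. So iteration $i$ contributes at most one query between a good pair (zero if $c_i$ is placed at the top), and summing over $i = 2, \dots, m$ yields goodCost(Elicit($\CC, \RR, \el$)) $\leq m - 1 = \OO(m)$.

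I do not anticipate a genuine obstacle; the only points requiring care are (i) stating and checking the invariant that $\QQ$ stays $\succ_\el$-sorted, which is what guarantees the linear scan stops at the true insertion point rather than continuing past it, and (ii) the boundary case where $c_i$ jumps over all of $c_1, \dots, c_{i-1}$, which contributes no good-pair query and is consistent with the bound.
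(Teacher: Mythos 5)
Your proof is correct and is essentially the paper's argument: the paper's one-line proof rests on exactly the observation you establish, namely that in each iteration of the for loop at line~\ref{elicit_for} every candidate skipped during the backward scan forms a bad pair with $c_i$ and only the final (stopping) comparison can involve a good pair. You have simply spelled out the invariant and the case analysis that the paper leaves implicit.
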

\begin{proof}
 Follows immediately from the observation that in any iteration of the for loop at line \ref{elicit_for} in \Cref{alg:elicit}, only one good pair of candidates are compared.
\end{proof}

We now use \Cref{alg:elicit} iteratively to find the profile. We present the pseudocode in \Cref{alg:final} which works for the more general setting where a single crossing ordering is known but the voters are arriving in any arbitrary order $\pi$. We next compute the query complexity of \Cref{alg:final} when voters are arriving in a single crossing order.

\begin{algorithm}[ht]
\caption{PreferenceElicit($\pi$)}\label{alg:final}
 \begin{algorithmic}[1]
  \Require{$\pi\in\SB_n$}
  \Ensure{Profile of all the voters}
  \State $\QQ[\pi(1)] \leftarrow$ Elicit $\succ_{\pi(1)}$ using \Cref{obs:naive} \Comment{\QQ stores the profile}
  \State $\XX \leftarrow \{\pi(1)\}$ \Comment{Set of voters' whose preferences have already been elicited}
  \For{$i \gets 2 \textrm{ to } n$} \Comment{Elicit the preference of voter $\pi(i)$ in iteration $i$}
   \State $k \leftarrow \min_{j\in\XX} |j-i|$\label{final_k} \Comment{Find the closest known preference}
   \State $\RR \leftarrow \QQ[k], \XX \leftarrow \XX\cup\{\pi(i)\}$
   \State $\QQ[\pi(i)] \leftarrow \text{ Elicit}(\CC, \RR, \pi(i))$
  \EndFor
  \Return \QQ
 \end{algorithmic}
\end{algorithm}

\begin{theorem}\label{thm:sc_seq_known_ub}
 Assume that the voters are arriving sequentially according to an order with respect to which a profile \PP is single crossing. Then there is a \PE algorithm with query complexity $\BigO(mn + m^2)$.
\end{theorem}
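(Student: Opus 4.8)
The plan is to show that Algorithm~\ref{alg:final}, when $\pi$ is itself a single crossing order (so after renaming $\pi = id_n$), makes $\BigO(mn + m^2)$ queries. First I would elicit $\succ_{\pi(1)}$ via \Cref{obs:naive} at cost $\BigO(m\log m)$, which is absorbed into the $\BigO(m^2)$ term. For $i \geq 2$, since $\pi$ is a single crossing order, line~\ref{final_k} picks the immediately preceding voter $k = \pi(i-1)$ (its preference is already in $\QQ$ and is the closest), so $\RR = \QQ[\pi(i-1)] = \succ_{\pi(i-1)}$. Thus each call is $\text{Elicit}(\CC, \succ_{\pi(i-1)}, \pi(i))$, comparing consecutive votes in the single crossing order. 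I then split the total query count as $\sum_i \text{goodCost}(\text{Elicit}(\CC, \succ_{\pi(i-1)}, \pi(i))) + \sum_i \text{badCost}(\text{Elicit}(\CC, \succ_{\pi(i-1)}, \pi(i)))$, where good/bad is always taken with respect to the pair $(\succ_{\pi(i-1)}, \succ_{\pi(i)})$.

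For the good cost, \Cref{lem:good} gives $\text{goodCost}(\text{Elicit}(\CC, \succ_{\pi(i-1)}, \pi(i))) = \BigO(m)$ for each $i$, so summing over the $n-1$ iterations yields $\BigO(mn)$ total. The main work is bounding the total bad cost across all iterations by $\BigO(m^2)$. The key structural fact is the "one permitted crossing'' observation already foreshadowed in the text: for any pair $\{x,y\}$ of candidates, because \PP is single crossing with respect to $\pi$, the relative order of $x$ and $y$ changes at most once as we sweep through $\succ_{\pi(1)}, \succ_{\pi(2)}, \dots, \succ_{\pi(n)}$. Hence $\{x,y\}$ is a bad pair for the transition $(\succ_{\pi(i-1)}, \succ_{\pi(i)})$ for at most one index $i$. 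Next I would argue that within a single call $\text{Elicit}(\CC, \RR, \el)$, each bad pair $\{x,y\}$ is queried only $\BigO(1)$ times: in the insertion-sort pass, when $c_i$ is being inserted we scan backwards and compare $c_i$ against candidates already placed; a given unordered pair $\{c_i, c_j\}$ with $j < i$ is compared at most once during iteration $i$ (we stop scanning as soon as we find the insertion point, and in any case each pair is counted once by the query-complexity convention), so the per-call cost attributable to the pair $\{x,y\}$ is $\BigO(1)$. Combining: $\sum_i \text{badCost}(\text{Elicit}(\CC, \succ_{\pi(i-1)}, \pi(i))) \leq \sum_{\{x,y\}} (\text{number of transitions where } \{x,y\} \text{ is bad}) \cdot \BigO(1) \leq {m \choose 2} \cdot \BigO(1) = \BigO(m^2)$.

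The step I expect to be the main obstacle is making the charging argument for the bad cost fully rigorous --- specifically, verifying that in iteration $i$ of Elicit, the comparisons performed by the backward linear scan for inserting $c_i$ really do touch each bad pair only a bounded number of times, and that a pair which is \emph{good} for the current transition but gets "jumped over'' is correctly charged to the good cost (handled by \Cref{lem:good}) rather than leaking into the bad count. One must also double-check that after Elicit$(\CC,\RR,\el)$ finishes, the pair $\{x,y\}$ which was bad for this transition does not generate further queries in later iterations of the \emph{same} call (it cannot, since each pair is compared at most once per iteration and the insertion of a candidate only reveals its order against candidates it is compared with during its own insertion). Once these bookkeeping points are settled, adding the two bounds gives query complexity $\BigO(mn) + \BigO(m^2) + \BigO(m\log m) = \BigO(mn + m^2)$, as claimed.
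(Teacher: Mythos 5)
Your proposal is correct and follows essentially the same route as the paper's proof: bound the good cost by $\BigO(m)$ per call via \Cref{lem:good}, and charge each bad comparison to its candidate pair, using the single crossing property to argue that each pair is bad for at most one consecutive transition, giving $\BigO(m^2)$ total bad cost. The bookkeeping concerns you flag are already handled by the paper's convention that query complexity counts each distinct tuple $(\el,x,y)$ only once, which is exactly what the paper's indicator $b(x,y,i)$ encodes.
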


\begin{proof}
 By renaming, let us assume, without loss of generality, that the voters are arriving according to the identity permutation $id_n$ of the voters and the profile \PP is single crossing with respect to $id_n$. Let the profile $\PP$ be $(P_1, P_2, \ldots, P_n) \in \LL(\CC)^n$. For two candidates $x, y\in \CC$ and a voter $i\in\{2, \ldots, n\}$, let us define a variable $b(x,y,i)$ to be one if $x$ and $y$ are compared for the voter $i$ by Elicit(\CC,$P_{i-1}$, $i$) and $(x, y)$ is a bad pair of candidates with respect to the preferences of voter $i$ and $i-1$; otherwise $b(x,y,i)$ is defined to be zero. Then we have the following. 
 \begin{eqnarray*}
  \shortversion{&&}\text{CostPreferenceElicit}(id_n)\shortversion{\\} &=& \BigO(m\log m) +\sum_{i=2}^n\mathlarger{\mathlarger{(}}\text{goodCost(\Query}(\CC, P_{i-1}, i)) + \shortversion{\\&&}\text{badCost(\Query}(\CC, P_{i-1}, i))\mathlarger{\mathlarger{)}}\\
  &\le& \BigO(m\log m + mn) + \sum_{i=2}^n\text{badCost(\Query}(\CC, P_{i-1}, i))\\
  &=& \BigO(m\log m + mn) + \mathlarger{\sum}_{(x, y)\in\CC\times\CC} \left(\sum_{i=2}^n b(x, y, i)\right)\\
  &\le& \BigO(m\log m + mn) + \sum_{(x, y)\in\CC\times\CC} 1\\
  &=& \BigO(mn + m^2)
 \end{eqnarray*} 
 The first inequality follows from \Cref{lem:good}, the second equality follows from the definition of $b(x, y, i)$, and the second inequality follows from the fact that $\sum_{i=2}^n b(x, y, i) \le 1$ for every pair of candidates $(x,y)\in\CC$ since the profile \PP is single crossing.
\end{proof}

We show next that, when the voters are arriving in a single crossing order, the query complexity upper bound in \Cref{thm:sc_seq_known_lb} is tight for a large number of voters up to constant factors. The idea is to pair up the candidates in a certain way and argue that the algorithm must compare the candidates in every pair for every voter thereby proving a $\Omega(mn)$ lower bound on query complexity.

\begin{theorem}\label{thm:sc_seq_known_lb}
 Assume that the voters are arriving sequentially according to an order with respect to which a profile \PP is single crossing. Then any \PE algorithm has query complexity $\Omega(m\log m + mn)$.
\end{theorem}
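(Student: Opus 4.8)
The $\Omega(m\log m)$ term is immediate from the sorting lower bound, exactly as in the proof of \Cref{thm:sc_random_lb}: a profile consisting of a single arbitrary preference is single crossing, and even with sequential access the first voter must be fully elicited, which requires $\Omega(m\log m)$ comparisons in the worst case. So the plan is to concentrate on the $\Omega(mn)$ term, and for this I would design an adversary (oracle) argument showing that any correct algorithm must, for each of the $n$ voters, make at least a constant number of queries per candidate-pair in a fixed pairing of the candidates.

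The construction I have in mind: assume $m$ is even, fix the pairing $\{c_1,c_2\},\{c_3,c_4\},\dots,\{c_{m-1},c_m\}$, and fix a ``base'' preference $Q = c_1 \succ c_2 \succ \cdots \succ c_m$. The adversary will answer every cross-pair query (comparing $c_i$ and $c_j$ from different pairs) consistently with $Q$, for \emph{every} voter; this guarantees that the only freedom left is, for each odd $i$ and each voter $\ell$, whether voter $\ell$ ranks $c_i \succ c_{i+1}$ or $c_{i+1} \succ c_i$. The key point is that \emph{any} assignment of these $\binom{m/2}{1}$-per-voter binary choices yields a single-crossing profile with respect to $id_n$, provided that for each fixed pair the sequence of choices down the voters changes at most once — and the adversary will maintain this invariant. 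For each odd $i$, the oracle keeps a ``current threshold'' $t_i$ (initially, say, all voters prefer $c_i \succ c_{i+1}$, i.e.\ $t_i = n$) and answers queries about the pair $(c_i,c_{i+1})$ according to the current threshold; since voters arrive sequentially, when voter $\ell$ is released the oracle can freeze the answer for $(c_i,c_{i+1})$ at voter $\ell$. The adversary strategy is: if the algorithm releases voter $\ell$ without ever having queried the pair $(c_i,c_{i+1})$ for voter $\ell$, then the algorithm's output for voter $\ell$ on this pair is determined before the adversary commits, so the adversary is free to set voter $\ell$'s true ordering of $(c_i,c_{i+1})$ to the opposite of whatever the algorithm will output — and it can do so while keeping the whole column single-crossing (place the crossing either just before or just after $\ell$, whichever is still available; at least one side is available since a single column of $\{0,1\}$ values with one flip can always be extended). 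This forces the algorithm to be wrong, a contradiction. Hence the algorithm must query every pair $(c_i,c_{i+1})$ for every voter, giving $(m/2)\cdot n = \Omega(mn)$ queries.

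The steps in order: (1) dispose of the $\Omega(m\log m)$ term via the single-voter sorting bound; (2) set up the pairing, the base order $Q$, and the family of admissible profiles, verifying that fixing all cross-pair comparisons to agree with $Q$ and allowing an arbitrary single-crossing pattern within each pair indeed produces a single-crossing profile with respect to the arrival order; (3) describe the adversary's bookkeeping (per-pair threshold, committing only when a voter is released) and prove the invariant that after answering any prefix of queries the partial assignment extends to a single-crossing profile consistent with all answers; (4) run the ``swap the unqueried pair'' argument to conclude that a correct algorithm cannot skip any pair for any voter; (5) combine to get $\Omega(m\log m + mn)$.

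The main obstacle I anticipate is step (3)–(4): making the adversary's commitments genuinely \emph{consistent} across the interleaving of queries to \emph{different} pairs and across the sequential release of voters, while simultaneously guaranteeing that the final profile is single crossing. The subtlety is that once the adversary decides to place voter $\ell$'s ordering of $(c_i,c_{i+1})$ opposite to the algorithm's output, it must check that this is compatible with the single-crossing constraint \emph{already partially fixed} by earlier answers on that same pair for other voters — i.e.\ that a column of $\{0,1\}$'s with at most one descent can still accommodate the forced value at position $\ell$. Since each pair is independent of the others (cross-pair answers are rigidly tied to $Q$), this reduces to a one-dimensional claim about $0/1$ sequences with one flip, which is elementary; but writing it so that the argument survives arbitrary query interleaving is where the care is needed. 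I would handle this by phrasing the invariant purely per-pair and noting that the adversary never needs to look at queries involving other pairs when answering a query about $(c_i, c_{i+1})$.
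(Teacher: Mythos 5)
Your skeleton is the same as the paper's (even number of candidates, the pairing $\{c_1,c_2\},\dots,\{c_{m-1},c_m\}$, a base order $Q$, cross-pair queries answered rigidly by $Q$, and an adversary that punishes any voter for whom some pair is left unqueried), and you correctly locate the delicate point. But the adversary you actually describe does not survive that delicate point. You keep a static per-pair threshold (``initially all voters prefer $c_i \succ c_{i+1}$''), answer all queries on the pair according to it, and only at the end try to ``place the crossing either just before or just after $\ell$, whichever is still available,'' justifying this with the claim that a $0/1$ column with one flip can always be extended. That claim is false in the relevant case: if voter $\ell$ skips the pair but voters $\ell-1$ and $\ell+1$ both query it and both receive $c_i \succ c_{i+1}$ (which is exactly what your unchanged threshold dictates), then single-crossingness \emph{forces} voter $\ell$ to also order $c_i \succ c_{i+1}$ --- putting the opposite order at position $\ell$ would create two crossings. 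In that situation the algorithm can legitimately output voter $\ell$'s ordering on the pair without ever querying it, and your adversary cannot make it wrong. So the conclusion ``the algorithm must query every pair for every voter'' does not follow from the adversary as specified.

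The missing move --- which is precisely what the paper's proof does --- is that the adversary must change its answers the moment the damage is possible: let $\kappa$ be the \emph{first} voter released with some pair $(c_i,c_{i+1})$ unqueried, and from voter $\kappa+1$ onward answer that pair according to the swapped order $c_{i+1}\succ c_i$. Only then do both placements of the crossing (at $\kappa$ or at $\kappa+1$) remain consistent with every committed answer, so voter $\kappa$'s ordering on the pair is genuinely ambiguous at the end and the adversary can contradict whatever the algorithm outputs. Note also that this weaker conclusion suffices: you do not need every voter to query every pair under a single fixed oracle; you only need that a correct algorithm can never reach a first skipping voter, which already yields the $\nfrac{mn}{2}$ count. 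With that repair your argument coincides with the paper's; without it, step (4) fails.
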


\begin{proof}
 The $\Omega(m\log m)$ bound follows from the fact that any profile consisting of only one preference $P\in\LL(\CC)$ is single crossing. By renaming, let us assume without loss of generality that the profile \PP is single crossing with respect to the identity permutation of the voters. Suppose we have an even number of candidates and $\CC = \{c_1, \ldots, c_m\}$. Consider the order $\QQ = c_1 \succ c_2 \succ \cdots \succ c_m$ and the pairing of the candidates $\{c_1, c_2\}, \{c_3, c_4\}, \ldots, \{c_{m-1}, c_m\}$. The oracle answers all the query requests consistently according to the order $Q$ till the first voter $\kappa$ for which there exists at least one odd integer $i\in[m]$ such that the pair $(c_i, c_{i+1})$ is not queried. If there does not exist any such $\kappa$, then the algorithm makes at least $\nfrac{mn}{2}$ queries thereby proving the statement. Otherwise, let $\kappa$ be the first vote such that the algorithm does not compare $c_i$ and $c_{i+1}$ for some odd integer $i\in[m]$. The oracle answers the queries for the rest of the voters $\{\kappa+1, \ldots, n\}$ according to the order $Q^\pr = c_1 \succ c_2 \succ \cdots \succ c_{i-1} \succ c_{i+1} \succ c_i \succ c_{i+2} \succ \cdots \succ c_m$. If the algorithm orders $c_i \succ_\kappa c_{i+1}$ in the preference of the voter $\kappa$, then the oracle sets the preference of the voter $\kappa$ to be $\QQ^\pr$. On the other hand, if the algorithm orders $c_{i+1} \succ_\kappa c_i$ in the preference of voter $\kappa$, then the oracle sets the preference of voter $\kappa$ to be \QQ. Clearly, the elicitation algorithm fails to correctly elicit the preference of the voter $\kappa$. However, the profiles for both the cases are single crossing with respect to the identity permutation of the voters and are consistent with the answers given to all the queries made by the algorithm. Hence, the algorithm must make at least $\nfrac{mn}{2}$ queries.
\end{proof}

We next move on to the case when we know a single crossing order of the voters; however, the voters arrive in an arbitrary order $\pi\in\SB_n$. The idea is to call the function Elicit($\CC, \RR, i$) where the current voter is the voter $i$ and \RR is the preference of the voter which is closest to $i$ according to a single crossing ordering and whose preference has already been elicited by the algorithm.

\begin{theorem}\label{thm:sc_seq_any_ub}
 Assume that a profile \PP is known to be single crossing with respect to a known ordering of voters $\sigma\in\SB_n$. However, the voters are arriving sequentially according to an arbitrary order $\pi\in\SB_n$ which may be different from $\sigma$. Then there is a \PE algorithm with query complexity $\BigO(mn + m^2\log n)$.
\end{theorem}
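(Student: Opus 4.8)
The plan is to analyze \Cref{alg:final} run with the given arrival order $\pi$; by renaming we may assume the known single crossing order $\sigma$ equals $id_n$. The first voter is elicited with $\BigO(m\log m)$ queries via \Cref{obs:naive}. For every subsequent arriving voter $\pi(i)$, the algorithm calls $\text{Elicit}(\CC,\RR,\pi(i))$ with $\RR = \QQ[k_i]$, where $k_i$ is the already-elicited voter that is closest to $\pi(i)$ in the single crossing order $id_n$. Exactly as in the proof of \Cref{thm:sc_seq_known_ub}, split the cost of this call into $\text{goodCost}$ and $\text{badCost}$. \Cref{lem:good} bounds $\text{goodCost}$ by $\BigO(m)$ per call, hence $\BigO(mn)$ in total. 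Since in one run of Elicit each pair of candidates is compared at most once, the $\text{badCost}$ of the call for $\pi(i)$ is at most $\delta(k_i,\pi(i))$, where $\delta(a,b)$ denotes the number of candidate pairs on which $P_a$ and $P_b$ disagree. Thus the total cost is $\BigO(m\log m + mn) + \sum_{i=2}^n \delta(k_i,\pi(i))$, and it remains to show $\sum_{i=2}^n \delta(k_i,\pi(i)) = \BigO(m^2\log n)$.

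Since $\PP$ is single crossing with respect to $id_n$, each unordered pair $\{x,y\}$ of candidates that is ordered differently by some two voters has a unique \emph{crossing edge} $e_{\{x,y\}}$: the voters preferring $x$ to $y$ form a prefix $\{1,\dots,t\}$ (or a suffix) of $id_n$, and $e_{\{x,y\}}$ is the edge between the consecutive voters $t$ and $t+1$. Then $P_a$ and $P_b$ disagree on $\{x,y\}$ precisely when $e_{\{x,y\}}$ lies between $a$ and $b$, i.e.\ $\min(a,b)\le t<t+1\le\max(a,b)$. Interchanging the order of summation,
\[
 \sum_{i=2}^n \delta(k_i,\pi(i)) \;=\; \sum_{\{x,y\}} c\bigl(e_{\{x,y\}}\bigr), \qquad c(e) := \bigl|\{\, i : e \text{ lies between } k_i \text{ and } \pi(i) \,\}\bigr|.
\]
As there are at most $\binom{m}{2}$ such pairs, it suffices to prove that $c(e) = \BigO(\log n)$ for every edge $e$; this is the crux.

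To bound $c(e)$ for the edge $e$ between voters $t$ and $t+1$, track, as voters arrive, the \emph{gap} straddling $e$: placing virtual sentinel voters at positions $0$ and $n+1$, let this gap be the maximal block of not-yet-arrived positions containing $t$ and $t+1$, and let $W$ be its width, i.e.\ the distance between the present voter just to the left of the block and the present voter just to its right. Then $W$ starts at $n+1$, is always a positive integer, and never increases. Suppose $e$ lies between $k_i$ and $\pi(i)$ at step $i$, so one of $k_i,\pi(i)$ is $\le t$ and the other is $\ge t+1$. Since $k_i$ is the present voter \emph{nearest} to $\pi(i)$, no present voter lies strictly between $\pi(i)$ and $k_i$; hence, just before step $i$, the straddling gap is bounded on one side by $\pi(i)$'s side-neighbour $p$ (a present voter or a sentinel) and on the other side by $k_i$ itself, and just after step $i$ it is bounded by $\pi(i)$ and $k_i$. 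If $p$ is a genuine voter, the nearest-neighbour inequality $|k_i-\pi(i)|\le|p-\pi(i)|$ forces the new width to be at most $\tfrac12 W$. If $p$ is a sentinel the halving can fail, but this happens at most twice over the whole run (once per sentinel), since immediately afterwards the corresponding side of $e$ is permanently occupied by $\pi(i)$. As $W\ge 1$ always, we get $c(e)\le \BigO(1)+\log_2(n+1)=\BigO(\log n)$. Substituting, $\sum_{i=2}^n\delta(k_i,\pi(i))\le\BigO(\log n)\binom{m}{2}=\BigO(m^2\log n)$, and the total query complexity is $\BigO(m\log m + mn + m^2\log n)=\BigO(mn + m^2\log n)$. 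The main obstacle is precisely this per-edge $\BigO(\log n)$ bound: the geometric shrinking of the straddling gap relies on \Cref{alg:final} always taking as reference the \emph{nearest} already-elicited voter (line~\ref{final_k}), since for an arbitrary choice of reference voter the argument would break down.
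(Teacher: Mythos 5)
Your proposal is correct and follows essentially the same route as the paper: run \Cref{alg:final} with the nearest already-elicited voter (in the known single crossing order) as the reference for Elicit, split the cost into goodCost and badCost via \Cref{lem:good}, and show each candidate pair contributes $\BigO(\log n)$ bad comparisons because the gap straddling its crossing point halves each time that pair is bad. Your accounting via crossing edges and sentinel positions is just a slightly more explicit rendering of the paper's argument that the quantity $\Delta$ shrinks by a factor of two at each contribution.
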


\begin{proof}
 By renaming, let us assume, without loss of generality, that the profile \PP is single peaked with respect to the identity permutation of the voters. Let the profile $\PP$ be $(P_1, P_2, \ldots, P_n) \in \LL(\CC)^n$. Let $f:[n]\longrightarrow [n]$ be the function such that $f(i)$ is the $k$ corresponding to the $i$ at line \ref{final_k} in \Cref{alg:final}. For candidates $x, y\in\CC$ and voter $\ell$, we define $b(x, y, \el)$ analogously as in the proof of \Cref{thm:sc_seq_known_ub}. We claim that $ B(x, y) = \sum_{i=2}^n b(x, y, i) \le \log n$. To see this, we consider any arbitrary pair $(x, y)\in\CC\times\CC$. Let the set of indices of the voters that have arrived immediately after the first time $(x,y)$ contributes to $B(x, y)$ be $\{i_1, i_2, \ldots, i_t\}$. Without loss of generality, let us assume $i_1 < i_2 < \cdots < i_t$. Again, without loss of generality, let us assume that voters $i_1, i_2, \ldots, i_j$ prefer $x$ over $y$ and voters $i_{j+1}, \ldots, i_t$ prefer $y$ over $x$. Let us define $\Delta$ to be the difference between smallest index of the voter who prefers $y$ over $x$ and the largest index of the voter who prefers $x$ over $y$. Hence, we currently have $\Delta = i_{j+1} - i_j$. A crucial observation is that if a new voter \el contributes to $B(x, y)$ then we must necessarily have $i_j < \el < i_{j+1}$. Another crucial observation is that whenever a new voter contributes to $B(x, y)$, the value of $\Delta$ gets reduced at least by a factor of two by the choice of $k$ at line \ref{final_k} in \Cref{alg:final}. Hence, the pair $(x,y)$ can contribute at most $(1+\log \Delta) = \BigO(\log n)$ to $B(x, y)$ since we have $\Delta\le n$ to begin with.\shortversion{ The rest of the proof is along the same line of the proof of \Cref{thm:sc_seq_known_ub}.}\longversion{ Then we have the following. 
 \sloppypar
 \begin{eqnarray*}
  \shortversion{&&}\text{CostPreferenceElicit}(\pi) \shortversion{\\} &=& \BigO(m\log m) +\sum_{i=2}^n\text{goodCost(\Query}(\CC, P_{f(i)}, i)) + \shortversion{\\&&} \text{badCost(\Query}(\CC, P_{f(i)}, i))\\
  &\le& \BigO(m\log m + mn) + \sum_{i=2}^n\text{badCost(\Query}(\CC, P_{f(i)}, i))\\
  &=& \BigO(m\log m + mn) + \sum_{(x, y)\in\CC\times\CC} \sum_{i=2}^n b(x, y, i)\\
  &\le& \BigO(m\log m + mn) + \sum_{(x, y)\in\CC\times\CC} \log n\\
  &=& \BigO(mn + m^2\log n)
 \end{eqnarray*} 
 The first inequality follows from \Cref{lem:good}, the second equality follows from the definition of $b(x, y, i)$, and the second inequality follows from the fact that $\sum_{i=2}^n b(x, y, i) \le \log n$.}
\end{proof}

\subsection{Results: Unknown Single Crossing Order}

We now turn our attention to \PE for single crossing profiles when no single crossing ordering is known. Before we present our \PE algorithm for this setting, let us first prove a few structural results about single crossing profiles which we will use crucially later. We begin with showing an upper bound on the number of distinct preferences in any single crossing profile.

\begin{lemma}\label{lem:bound}
 Let \PP be a profile on a set \CC of candidates which is single crossing. Then the number of distinct preferences in \PP is at most ${m\choose 2}$.
\end{lemma}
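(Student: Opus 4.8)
The plan is to bound the number of distinct preferences by counting, for each distinct preference, a pair of candidates that ``changes orientation'' at the boundary between consecutive blocks of identical preferences along a single crossing order. First I would fix a permutation $\sigma \in \SB_n$ witnessing that $\PP$ is single crossing, and by renaming assume $\sigma = id_n$; then group the voters into maximal blocks of consecutive voters with identical preferences, so that the sequence of preferences $P_1, P_2, \dots, P_n$ consists of $t$ distinct preferences $Q_1, Q_2, \dots, Q_t$ appearing as $Q_1$ for a block, then $Q_2$ for the next block, and so on, with $Q_j \ne Q_{j+1}$ for every $j \in [t-1]$. The goal is then to show $t \le {m \choose 2}$.

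The key step is to assign to each ``transition'' $j \in [t-1]$ a pair of candidates $\{x_j, y_j\}$ on which $Q_j$ and $Q_{j+1}$ disagree (such a pair exists since $Q_j \ne Q_{j+1}$), and argue this assignment is injective. Suppose for contradiction that two transitions $j < k$ get the same pair $\{x, y\}$. Then the ordering of $x$ and $y$ flips between $Q_j$ and $Q_{j+1}$, and flips again between $Q_k$ and $Q_{k+1}$; since the orientation of $\{x,y\}$ is constant within each block $Q_i$, this means the set of voters preferring (say) $x$ over $y$ is not an interval of $[n]$ with respect to $id_n$ — we get voters preferring $x$ over $y$, then a block preferring $y$ over $x$, then again voters preferring $x$ over $y$ (or the symmetric pattern). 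This contradicts the single crossing property with respect to $id_n$. Hence the map $j \mapsto \{x_j, y_j\}$ is injective, giving $t - 1 \le {m \choose 2}$, and in fact one can tighten this: the first preference $Q_1$ need not ``cost'' a pair, so the number of distinct preferences $t$ satisfies $t \le {m \choose 2} + 1$. To get the claimed bound $t \le {m \choose 2}$ exactly, I would note that the total number of orientation flips summed over all ${m \choose 2}$ pairs is at most ${m \choose 2}$ (each pair flips at most once by single crossing), while each transition between distinct consecutive preferences uses at least one flip, so $t - 1 \le {m \choose 2}$; the slightly stronger statement $t \le {m \choose 2}$ would follow if, e.g., $m \ge 2$ and we observe that not all ${m \choose 2}$ pairs can flip when only one does per transition unless $t-1$ is strictly less — alternatively the paper may simply be content with $t \le {m\choose 2}$ holding for $m \ge 2$ via a direct refinement, or absorbing the $+1$.

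The main obstacle I anticipate is the off-by-one issue: the clean counting argument naturally yields ${m \choose 2} + 1$, and getting the stated bound ${m \choose 2}$ requires a slightly more careful accounting — for instance, observing that a single crossing profile on $m \ge 2$ candidates with ${m\choose 2}+1$ distinct preferences would force every one of the ${m\choose 2}$ pairs to flip exactly once and moreover in a nested/compatible pattern, and checking that the two ``extreme'' preferences $Q_1$ and $Q_t$ must then be reverses of each other, which combined with a parity or consistency check rules out the extra preference. I would handle this by either (a) presenting the argument for $t \le {m\choose 2}$ directly when $m \ge 2$ using the observation that the reversal of the order consumes pairs in a constrained way, or (b) if the intended statement is really $\le {m\choose 2} + 1$, noting the discrepancy is immaterial for the later applications. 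Everything else — fixing $\sigma$, forming blocks, exhibiting a disagreeing pair, and deriving the contradiction from non-interval behavior — is routine and short.
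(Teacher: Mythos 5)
Your argument is the same as the paper's: each index $i$ with $P_i \ne P_{i+1}$ in a single crossing order must be witnessed by a pair of candidates that flips there, and each pair flips at most once, so there are at most ${m \choose 2}$ transitions. The off-by-one you worry about is real, and you should not try to repair it: the argument (the paper's included) only yields at most ${m \choose 2}+1$ distinct preferences, and that bound is tight --- already for $m=2$ the profile $(a\succ b,\; b\succ a)$ is single crossing with $2 > {2 \choose 2}$ distinct preferences, and more generally a maximal chain of adjacent transpositions from an order to its reverse realizes ${m \choose 2}+1$ distinct orders. So your option (b) is the correct resolution: the stated bound should read ${m \choose 2}+1$, and the extra $+1$ is immaterial everywhere the lemma is used (it only feeds into $\OO(\log m)$ and $\OO(m^2)$ estimates in \Cref{thm:sc_seq_unknown_ub}).
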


\begin{proof}
 By renaming, let us assume, without loss of generality, that the profile \PP is single crossing with respect to the identity permutation of the voters. We now observe that whenever the $i^{th}$ vote is different from the $(i+1)^{th}$ vote for some $i\in[n-1]$, there must exist a pair of candidates $(x, y)\in \CC\times\CC$ whom the $i^{th}$ vote and the $(i+1)^{th}$ vote order differently. Now the statement follows from the fact that, for every pair of candidates $(a,b)\in\CC\times\CC$, there can exist at most one $i\in[n-1]$ such that the $i^{th}$ vote and the $(i+1)^{th}$ vote order $a$ and $b$ differently.
\end{proof}

We show next that in every single crossing profile \PP where all the preferences are {\em distinct}, there exists a pair of candidates $(x, y)\in\CC\times\CC$ such that nearly half of the voters in \PP prefer $x$ over $y$ and the other voters prefer $y$ over $x$.

\begin{lemma}\label{lem:divide}
 Let \PP be a profile of $n$ voters such that all the preferences are distinct. Then there exists a pair of candidates $(x, y)\in\CC$ such that $x$ is preferred over $y$ in at least $\lfloor\nfrac{n}{2}\rfloor$ preferences and $y$ is preferred over $x$ in at least $\lfloor\nfrac{n}{2}\rfloor$ preferences in \PP.
\end{lemma}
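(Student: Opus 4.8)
The plan is to exploit the single crossing hypothesis (implicit in the statement, and made explicit in the discussion preceding the lemma) to locate a pair of candidates whose ``crossing point'' in a single crossing order of the voters falls exactly at the middle position $\lfloor\nfrac{n}{2}\rfloor$. If $n\le 1$ the claim is vacuous (with $\lfloor\nfrac{n}{2}\rfloor=0$, any pair of distinct candidates works), so assume $n\ge 2$ and, by renaming the voters, that $\PP=(\succ_1,\ldots,\succ_n)$ is single crossing with respect to the identity permutation.

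First I would set $i_0:=\lfloor\nfrac{n}{2}\rfloor$, so that $1\le i_0\le n-1$, and look at the two consecutive preferences $\succ_{i_0}$ and $\succ_{i_0+1}$. Since all preferences of $\PP$ are distinct, these two complete orders differ, hence they order some pair of candidates differently; after swapping names if necessary, call this pair $x,y$ with $x\succ_{i_0}y$ and $y\succ_{i_0+1}x$. Next consider $S:=\{j\in[n]:x\succ_j y\}$ and its complement $T:=\{j\in[n]:y\succ_j x\}$, which partition $[n]$. By the single crossing property applied to the pair $\{x,y\}$ (it forces both the block of voters preferring $x$ over $y$ and the block preferring $y$ over $x$ to be contiguous), both $S$ and $T$ are intervals of consecutive indices. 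Since $i_0\in S$, $i_0+1\in T$, and $S,T$ partition $[n]$, the interval $S$ must equal $[1,i_0]$ and $T$ must equal $[i_0+1,n]$: a consecutive block containing $i_0$ but not $i_0+1$ has the form $[a,i_0]$, a consecutive block containing $i_0+1$ but not $i_0$ has the form $[i_0+1,b]$, and $[a,i_0]\cup[i_0+1,b]=[n]$ forces $a=1$ and $b=n$. Hence exactly $|S|=i_0=\lfloor\nfrac{n}{2}\rfloor$ voters prefer $x$ over $y$ and exactly $|T|=n-\lfloor\nfrac{n}{2}\rfloor=\lceil\nfrac{n}{2}\rceil\ge\lfloor\nfrac{n}{2}\rfloor$ voters prefer $y$ over $x$, which is the pair asserted by the lemma.

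The only real subtlety is the step showing that the crossing is a clean prefix/suffix split positioned exactly at $i_0$ --- that is, that $S=[1,i_0]$ rather than some shorter interval ending at $i_0$; this is precisely where one needs \emph{both} $S$ and its complement $T$ to be intervals, which is what the single crossing condition delivers (equivalently, one may invoke that $\PP$ is single crossing with respect to the reversed voter order as well). Everything else is elementary bookkeeping about intervals of integers. For completeness I would also note that the single crossing hypothesis is genuinely needed here: without it the statement already fails for the profile consisting of a fixed order $\succ_0$ together with its $m-1$ neighbours obtained from $\succ_0$ by a single adjacent transposition, in which every pair of candidates is ordered the same way by all but at most one of the $n=m$ voters.
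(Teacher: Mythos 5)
Your proof is correct and follows essentially the same route as the paper's: both identify the pair of candidates on which voters $\lfloor n/2\rfloor$ and $\lfloor n/2\rfloor+1$ (in a single crossing order) disagree, and then use the single crossing property to conclude that the crossing for that pair occurs exactly at that position. Your write-up is merely more explicit about why the set of voters preferring $x$ over $y$ must be the full prefix $[1,\lfloor n/2\rfloor]$ (via the two-intervals-partitioning argument), and your closing remark that the implicitly assumed single crossing hypothesis is genuinely needed is accurate but not required.
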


\begin{proof}
 Without loss of generality, by renaming, let us assume that the profile \PP is single crossing with respect to the identity permutation of the voters. Since all the preferences in \PP are distinct, there exists a pair of candidates $(x, y)\in\CC\times\CC$ such that the voter $\lfloor\nfrac{n}{2}\rfloor$ and the voter $\lfloor\nfrac{n}{2}\rfloor + 1$ order $x$ and $y$ differently. Let us assume, without loss of generality, that the voter $\lfloor\nfrac{n}{2}\rfloor$ prefers $x$ over $y$. Now, since the profile \PP is single crossing, every voter in $[\lfloor\nfrac{n}{2}\rfloor]$ prefer $x$ over $y$ and every voter in $\{\lfloor\nfrac{n}{2}\rfloor + 1, \ldots, n\}$ prefer $y$ over $x$.
\end{proof}

Using \Cref{lem:bound,lem:divide} we now design a \PE algorithm when no single crossing ordering of the voters is known. The overview of the algorithm is as follows. At any point of time in the elicitation process, we have the set \QQ of all the distinct preferences that we have already elicited completely and we have to elicit the preference of a voter $\ell$. We first search the set of votes \QQ for a preference which is {\em possibly} same as the preference $\succ_\el$ of the voter $\ell$. It turns out that we can find a possible match $\succ\in\QQ$ using $\OO(\log|\QQ|)$ queries due to \Cref{lem:divide} which is $\OO(\log m)$ due to \Cref{lem:bound}. We then check whether the preference of the voter \el is indeed the same as $\succ$ or not using $\OO(m)$ queries. If $\succ$ is the same as $\succ_\el$, then we have elicited $\succ_\el$ using $\OO(m)$ queries. Otherwise, we elicit $\succ_\el$ using $\OO(m\log m)$ queries using \Cref{obs:naive}. Fortunately, \Cref{lem:bound} tells us that we would use the algorithm in \Cref{obs:naive} at most $\OO(m^2)$ times. We present the pseudocode of our \PE algorithm in this setting in \Cref{alg:unknown}. \longversion{It uses \Cref{alg:same} as a subroutine which returns \true if the preference of any input voter is same as any given preference.

\begin{algorithm}[ht]
 \caption{Same(\RR, \el)}\label{alg:same}
 \begin{algorithmic}[1]
  \Require{$\RR = c_1\succ c_2\succ \cdots \succ c_m \in\LL(\CC) ,\el\in[n]$}
  \Ensure{\true if the preference of the $\el^{th}$ voter is \RR; \false otherwise}
  \For{$i \gets 1 \textrm{ to } m-1$}
   \If{\Query($c_i \succ_\el c_{i+1}$) = \false}
    \Return \false \Comment{We have found a mismatch.}
   \EndIf
  \EndFor
  \Return \true
 \end{algorithmic} 
\end{algorithm}
}

\begin{algorithm}[ht]
\caption{\longversion{PreferenceElicitUnknownSingleCrossingOrdering}\shortversion{PreferenceElicitUnknownSCOrdering}($\pi$)}\label{alg:unknown}
 \begin{algorithmic}[1]
  \Require{$\pi\in\SB_n$}
  \Ensure{Profile of all the voters}
  \State $\RR, \QQ \leftarrow \emptyset$ \Comment{\QQ stores all the votes seen so far without duplicate. \RR stores the profile.}
  \For{$i \gets 1 \textrm{ to } n$} \Comment{Elicit preference of the $i^{th}$ voter in $i^{th}$ iteration of this for loop.}
   \State $\QQ^\pr \leftarrow \QQ$ 
   \While{$|\QQ^\pr|>1$} \Comment{Search \QQ to find a vote potentially same as the preference of $\pi(i)$}
    \State Let $x, y\in \CC$ be two candidates such that at least $\lfloor\nfrac{|\QQ^\pr|}{2}\rfloor$ votes in $\QQ^\pr$ prefer $x$ over $y$ and at least $\lfloor\nfrac{|\QQ^\pr|}{2}\rfloor$ votes in $\QQ^\pr$ prefer $y$ over $x$.
    \If{\Query($x \succ_{\pi(i)} y$) = \true}
     \State $\QQ^\pr \leftarrow \{ v\in\QQ^\pr : v \text{ prefers } x \text{ over } y \}$
    \Else
     \State $\QQ^\pr \leftarrow \{ v\in\QQ^\pr : v \text{ prefers } y \text{ oer } x \}$
    \EndIf
   \EndWhile
   \State Let $w$ be the only vote in $\QQ^\pr$\label{potential_match} \Comment{$w$ is potentially same as the preference of $\pi(i)$}
   \If{Same($w, \pi(i)$) = \true} \Comment{Check whether the vote $\pi(i)$ is potentially same as $w$}
    \State $\RR[\pi(i)] \leftarrow w$
   \Else 
    \State $\RR[\pi(i)] \leftarrow$ Elicit using \Cref{obs:naive}
    \State $\QQ \leftarrow \QQ\cup\{\RR[\pi(i)]\}$
   \EndIf
  \EndFor
  \Return \RR
 \end{algorithmic}
\end{algorithm}

\begin{theorem}\label{thm:sc_seq_unknown_ub}
 Assume that a profile \PP is known to be single crossing. However, no ordering of the voters with respect to which \PP is single crossing is known. The voters are arriving sequentially according to an arbitrary order $\pi\in\SB_n$. Then there is a \PE algorithm with query complexity $\BigO(mn + m^3\log m)$.
\end{theorem}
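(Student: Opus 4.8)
The algorithm itself is \Cref{alg:unknown}; what remains is to show that it outputs \PP and to bound its query complexity, and the plan is to split the queries made into three groups and bound each separately: (i) the queries spent inside the while loop that searches \QQ for a candidate match $w$, (ii) the queries spent in the verification call Same$(w,\pi(i))$, and (iii) the queries spent running a full elicitation via \Cref{obs:naive} whenever that verification fails.

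I would first fix the governing invariant: every oracle answer is consistent with voter $\pi(i)$'s true preference $\succ_{\pi(i)}$, so during iteration $i$ the vote $\succ_{\pi(i)}$ --- if it is present in \QQ when that iteration begins --- is never removed from $\QQ^\pr$ by the while loop. Hence the unique surviving element $w$ of $\QQ^\pr$ equals $\succ_{\pi(i)}$ precisely when $\succ_{\pi(i)}\in\QQ$, and in that case Same returns \true; conversely, since always $w\in\QQ^\pr\subseteq\QQ$, whenever $\succ_{\pi(i)}\notin\QQ$ we have $w\ne\succ_{\pi(i)}$, so Same returns \false and $\succ_{\pi(i)}$ is correctly obtained via \Cref{obs:naive} and appended to \QQ. (For $i=1$, and more generally whenever $\QQ=\emptyset$, the while loop is vacuous and we adopt the convention that the algorithm proceeds directly to \Cref{obs:naive}.) It also helps to record that Same is a genuine equality test: two complete orders over \CC that agree on all $m-1$ consecutive comparisons of one of them are identical. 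This establishes correctness.

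For group (i): at the start of each pass of the while loop, $\QQ^\pr$ viewed as a profile is single crossing and has pairwise distinct votes, so \Cref{lem:divide} supplies a pair $(x,y)$ splitting $\QQ^\pr$ into two blocks each of size at least $\lfloor|\QQ^\pr|/2\rfloor$; the query on $(x,y)$ retains exactly one block, so $|\QQ^\pr|$ drops to at most $\lceil|\QQ^\pr|/2\rceil$. Since \Cref{lem:bound} gives $|\QQ|\le{m\choose 2}$ at all times, the loop runs $\BigO(\log m)$ times, contributing $\BigO(\log m)$ queries per voter and $\BigO(n\log m)$ in total. Group (ii) makes exactly $m-1$ queries per voter, contributing $\BigO(mn)$ overall.

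The decisive step --- and the one I expect to be the main obstacle --- is bounding the size of group (iii), i.e., how often \Cref{obs:naive} is invoked. By the invariant above, a full elicitation happens in iteration $i$ only when $\succ_{\pi(i)}\notin\QQ$, and in that case a new distinct preference is appended to \QQ; hence the number of such iterations is at most the number of distinct preferences appearing in \PP, which is at most ${m\choose 2}$ by \Cref{lem:bound}. Since each full elicitation costs $\BigO(m\log m)$ queries, group (iii) contributes $\BigO(m^3\log m)$. Summing the three groups and absorbing lower-order terms ($n\log m=\BigO(mn)$ and $m\log m=\BigO(m^3\log m)$) yields total query complexity $\BigO(mn+m^3\log m)$, as claimed. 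The subtlety in this last step is precisely that a failed Same test must be shown to certify that $\succ_{\pi(i)}$ is genuinely new before \Cref{lem:bound} can be applied to cap the number of expensive elicitations --- which is why the survival invariant of the search loop has to be proved first.
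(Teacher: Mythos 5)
Your proposal is correct and follows essentially the same route as the paper: search \QQ via \Cref{lem:divide} in $\BigO(\log m)$ queries, verify in $\BigO(m)$ queries, and charge each full $\BigO(m\log m)$ elicitation to a new distinct preference, of which there are at most ${m\choose 2}$ by \Cref{lem:bound}. Your explicit survival invariant and the equality-test observation for Same are a welcome tightening of the correctness argument, which the paper leaves implicit, but the decomposition and the counting are the same.
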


\begin{proof}
 We present the pesudocode in \Cref{alg:unknown}. We maintain two arrays in the algorithm. The array \RR is of length $n$ and the $j^{th}$ entry stores the preference of voter $j$. The other array \QQ stores all the votes seen so far after removing duplicate votes; more specifically, if some specific preference $\suc$ has been seen \el many times for any $\el>0$, \QQ stores only one copy of $\suc$. Upon arrival of voter $i$, we first check whether there is a preference in \QQ which is ``potentially'' same as the preference of voter $i$. At the beginning of the search, our search space $\QQ^\pr=\QQ$ for a potential match in \QQ is of size $|\QQ|$. We next iteratively keep halving the search space as follows. We find a pair of candidates $(x, y)\in\CC\times\CC$ such that at least $\lfloor\nfrac{|\QQ^\pr|}{2}\rfloor$ preferences in $\QQ^\pr$ prefer $x$ over $y$ and at least $\lfloor\nfrac{|\QQ^\pr|}{2}\rfloor$ preferences prefer $y$ over $x$. The existence of such a pair of candidates is guaranteed by \Cref{lem:divide} and can be found in $\OO(m^2)$ time by simply going over all possible pairs of candidates. By querying how voter $i$ orders $x$ and $y$, we reduce the search space $\QQ^\pr$ for a potential match in \QQ to a set of size at most $\lfloor\nfrac{|\QQ^\pr|}{2}\rfloor+1$. Hence, in $\BigO(\log m)$ queries, the search space reduces to only one preference since we have $|\QQ|\le m^2$ by \Cref{lem:bound}. Once we find a potential match $w$ in \QQ (line \ref{potential_match} in \Cref{alg:unknown}), we check whether the preference of voter $i$ is the same as $w$ or not using $\BigO(m)$ queries. If the preference of voter $i$ is indeed same as $w$, then we output $w$ as the preference of voter $i$. Otherwise, we use \Cref{obs:naive} to elicit the preference of voter $i$ using $\BigO(m\log m)$ queries and put the preference of voter $i$ in \QQ. Since the number of times we need to use the algorithm in \Cref{obs:naive} is at most the number of distinct votes in \PP which is known to be at most $m^2$ by \Cref{lem:bound}, we get the statement.
\end{proof}
\Cref{thm:sc_seq_unknown_ub} immediately gives us the following corollary in the random access to voters model when no single crossing ordering is known.

\begin{corollary}\label{cor:sc_random_unknown_ub}
 Assume that a profile \PP is known to be single crossing. However, no ordering of the voters with respect to which \PP is single crossing is known. Given a random access to voters, there is an \PE algorithm with query complexity $\BigO(mn + m^3\log m)$.
\end{corollary}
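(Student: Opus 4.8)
The plan is to derive this directly from \Cref{thm:sc_seq_unknown_ub} by observing that the random access model subsumes the sequential access model. In the sequential setting of \Cref{thm:sc_seq_unknown_ub} the arrival order $\pi\in\SB_n$ is adversarial and, once a voter is released, it can never be queried again; nevertheless \Cref{alg:unknown} achieves query complexity $\BigO(mn + m^3\log m)$ for \emph{every} such $\pi$. In the random access model we have at least this much power: we may impose any processing order on the voters we wish, and in particular we may simulate a sequential arrival order of our own choosing without ever having to revisit a voter.

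Concretely, I would fix an arbitrary permutation $\pi\in\SB_n$ (for instance $\pi = id_n$) and run \Cref{alg:unknown} on input $\pi$, answering each query it issues by using random access to the voter in question. Since \Cref{alg:unknown} processes the voters in the order $\pi(1), \ldots, \pi(n)$ and never queries a voter after moving past it, this is a legal execution in the random access model, and it coincides exactly with the execution analysed in the proof of \Cref{thm:sc_seq_unknown_ub}. Hence correctness and the query complexity bound $\BigO(mn + m^3\log m)$ carry over verbatim, which is precisely the claimed statement.

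There is essentially no obstacle to overcome here; the only point worth a sentence is that none of the ingredients used in \Cref{thm:sc_seq_unknown_ub} --- \Cref{lem:bound} bounding the number of distinct votes by ${m\choose 2}$, \Cref{lem:divide} enabling the $\BigO(\log m)$-query binary search over the maintained set \QQ of distinct elicited preferences, and the accounting that charges a fresh use of \Cref{obs:naive} only to a not-yet-seen preference --- makes any use of the sequential restriction. Dropping that restriction therefore costs nothing, and the corollary follows immediately.
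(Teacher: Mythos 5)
Your proposal is correct and matches the paper's own (one-line) proof: the paper simply observes that \Cref{alg:unknown} works in the random access setting with the exact same query complexity bound, which is precisely your simulation argument spelled out in more detail. Nothing further is needed.
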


\longversion{
\begin{proof}
 \Cref{alg:unknown} works for this setting also and exact same bound on the query complexity holds.
\end{proof}
}
We now show that the query complexity upper bound of \Cref{cor:sc_random_unknown_ub} is tight up to constant factors for large number of voters.

\begin{theorem}\label{thm:sc_random_unknown_lb}
 Given a random access to voters, any \PE algorithm which do not know any ordering of the voters with respect to which the input profile is single crossing has query complexity $\Omega(m\log m + mn)$.
\end{theorem}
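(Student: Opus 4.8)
The statement splits into an $\Omega(m\log m)$ bound and an $\Omega(mn)$ bound, and it suffices to establish each separately since a sum is at most twice its larger term. The $\Omega(m\log m)$ part I would dispose of exactly as in \Cref{thm:sc_random_lb} and \Cref{thm:sc_seq_known_lb}: a profile in which every voter holds one and the same preference $\succ\in\LL(\CC)$ is single crossing with respect to every ordering of the voters, so a correct elicitation algorithm must in particular recover $\succ$, which requires $\Omega(m\log m)$ comparisons by the decision-tree lower bound for sorting.

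For the $\Omega(mn)$ part the plan is a (non-adaptive) adversary. Assume $m$ is even, fix the reference order $Q = c_1\succ c_2\succ\cdots\succ c_m$, and pair the candidates as $\{c_1,c_2\},\{c_3,c_4\},\ldots,\{c_{m-1},c_m\}$. Let the oracle answer every query according to $Q$. Since the profile in which all $n$ voters hold $Q$ is single crossing, every prefix of these answers is consistent with a single crossing profile, so this is a legitimate oracle against an algorithm that is promised only that the profile is single crossing with respect to some unknown order. I claim that a correct algorithm must, for every voter $\ell\in[n]$ and every odd $i\in[m]$, compare $c_i$ with $c_{i+1}$ for voter $\ell$; this already forces $\lfloor m/2\rfloor\cdot n = \Omega(mn)$ queries.

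To prove the claim I would argue by contradiction. Suppose a correct algorithm halts without ever comparing $c_i$ and $c_{i+1}$ for some voter $\ell$ and some odd $i$, and let $\PP^{\star}$ be the profile it outputs. Consider the profile $\PP'$ in which every voter holds $Q$ except voter $\ell$, who holds $Q$ with $c_i$ and $c_{i+1}$ transposed. Because $c_i$ and $c_{i+1}$ are adjacent in $Q$, this transposition flips the relative order of no pair other than $(c_i,c_{i+1})$, so $\PP'$ gives the same answer as the all-$Q$ profile to every query the algorithm actually issued; hence both profiles are consistent with its transcript. Moreover $\PP'$ is single crossing: place voter $\ell$ first and the remaining voters in any order, so that on the pair $(c_i,c_{i+1})$ voter $\ell$ alone sits on one side (a prefix of length one), while every other pair is ordered identically by all voters. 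Now whichever way $\PP^{\star}$ orders $c_i$ and $c_{i+1}$ for voter $\ell$, one of the two single crossing profiles (the all-$Q$ profile, or $\PP'$) disagrees with $\PP^{\star}$ on that ordering, and both are consistent with the transcript; so the algorithm is wrong on an admissible instance, a contradiction. (The odd-$m$ case follows by ignoring one candidate.)

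The main subtlety, and the reason the candidates must be bundled into disjoint adjacent pairs, is keeping the perturbed profile $\PP'$ simultaneously single crossing \emph{and} indistinguishable from the all-$Q$ profile on the queries actually asked: transposing two candidates that are non-adjacent in $Q$ would disturb other pairwise comparisons and could contradict the transcript. This is exactly the feature that is unavailable in the known-order model of \Cref{thm:sc_random_lb} — there one cannot slide voter $\ell$ to the end of the fixed single crossing order — which is why that setting admits only the weaker $\Omega(m\log m + m\log n)$ lower bound.
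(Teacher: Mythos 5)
Your proposal is correct and follows essentially the same route as the paper's proof: the same $\Omega(m\log m)$ reduction to sorting, the same pairing $\{c_1,c_2\},\ldots,\{c_{m-1},c_m\}$ with an oracle answering according to $Q$, and the same adversary argument that swaps an adjacent pair in one voter's preference and places that voter at an extreme of the single crossing order (you put the voter first, the paper puts it last; by the reversal observation these are equivalent). Your closing remarks on why adjacency of the swapped pair and the freedom to choose the single crossing order matter are accurate but not needed beyond what the paper states.
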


\begin{proof}
 The $\Omega(m\log m)$ bound follows from sorting lower bound and the fact that any profile consisting of only one preference $P\in\LL(\CC)$ is single crossing. Suppose we have an even number of candidates and $\CC = \{c_1, \ldots, c_m\}$. Consider the ordering $\QQ = c_1 \succ c_2 \succ \cdots \succ c_m$ and the pairing of the candidates $\{c_1, c_2\}, \{c_3, c_4\}, \ldots, \{c_{m-1}, c_m\}$. The oracle answers all the query requests consistently according to the ordering $Q$. We claim that any \PE algorithm \AA must compare $c_i$ and $c_{i+1}$ for every voter and for every odd integer $i\in[m]$. Indeed, otherwise, there exist a voter $\kappa$ and an odd integer $i\in[m]$ such that the algorithm \AA does not compare $c_i$ and $c_{i+1}$. Suppose the algorithm outputs a profile $\PP^\pr$. If the voter $\kappa$ prefers $c_i$ over $c_{i+1}$ in $\PP^\pr$, then the oracle fixes the preference $\suc_\kappa$ to be $c_1 \succ c_2 \succ \cdots \succ c_{i-1} \succ c_{i+1} \succ c_i \succ c_{i+2} \succ \cdots \succ c_m$; otherwise the oracle fixes $\suc_\kappa$ to be \QQ. The algorithm fails to correctly output the preference of the voter $\kappa$ in both the cases. Also the final profile with the oracle is single crossing with respect to any ordering of the voters that places the voter $\kappa$ at the end. Hence, \AA must compare $c_i$ and $c_{i+1}$ for every voter and for every odd integer $i\in[m]$ and thus has query complexity $\Omega(mn)$.
\end{proof}

\longversion{
\subsection{Nearly Single Crossing}

We now consider preference elicitation for profiles which are nearly single crossing. We begin with profiles with bounded single crossing width.

\begin{proposition}
 Suppose a profile \PP is single crossing with width $w$. Given a \PE algorithm \AA with query complexity $\qqq(m,n)$ for random or sequential access to the voters when a single crossing order is known or unknown, there exists another \PE algorithm $\AA^\pr$ for the single crossing profiles with width $w$ which has query complexity $\OO(\qqq(\nfrac{m}{w},n) + mn\log w)$ under same setting as \AA.
\end{proposition}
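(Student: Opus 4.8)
The plan is to reduce the width-$w$ case to the single-crossing case by contracting the blocks of the witnessing partition. Let $\CC = \CC_1 \uplus \cdots \uplus \CC_t$ be a partition witnessing that $\PP$ has single crossing width $w$, so that $t \le \lceil m/w\rceil$, each $|\CC_j|\le w$, each block $\CC_j$ occurs as a contiguous interval in every vote of $\PP$, and the quotient profile $\widehat{\PP}$ obtained by contracting each $\CC_j$ to a single super-candidate is single crossing (with respect to a known ordering of the voters precisely when $\AA$ is the known-order algorithm). Fix a representative $r_j\in\CC_j$ for each $j$ and set $\widehat{\CC}=\{r_1,\dots,r_t\}$. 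The restriction of $\PP$ to $\widehat{\CC}$ is, up to renaming candidates, exactly $\widehat{\PP}$, hence a single crossing profile on $t$ candidates and $n$ voters, and if a single crossing order is known for $\PP$ it is known for this restriction too.

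The algorithm $\AA^\pr$ then proceeds in two phases. In the first phase I would run $\AA$ on the instance with candidate set $\widehat{\CC}$ and voter set $\VV$: whenever $\AA$ issues a query $\Query(r_a\succ_\el r_b)$, forward it verbatim to the oracle for $\PP$. Since $\AA$ is a correct \PE algorithm for single crossing profiles, this recovers $\widehat{\PP}$, i.e. the relative order of the blocks inside every vote, using $\qqq(t,n)\le\qqq(\lceil m/w\rceil,n)$ queries. In the second phase, for every voter $\el$ and every block $\CC_j$ I would elicit the internal order of $\CC_j$ in $\succ_\el$ by invoking \Cref{obs:naive} on the candidate set $\CC_j$, at a cost of $\OO(|\CC_j|\log|\CC_j|)=\OO(w\log w)$ queries per $(\el,\CC_j)$ pair, hence $\OO\!\big(n\sum_j|\CC_j|\log w\big)=\OO(mn\log w)$ queries overall. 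Because each $\CC_j$ is an interval in $\succ_\el$, the preference $\succ_\el$ is then uniquely reconstructed by writing the blocks in the order supplied by $\widehat{\PP}$ and filling each block's slot with its internal order; carrying this out for all $\el$ yields $\PP$, and the total query complexity is $\OO\!\big(\qqq(\nfrac{m}{w},n)+mn\log w\big)$.

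It remains to observe that the composition respects the access model, with no loss. In the random-access model there is nothing to check. In the sequential-access model, when voter $\pi(i)$ arrives we must complete all queries to $\pi(i)$ before releasing her: I would first let $\AA$ perform whatever queries it wants on the super-voter $\pi(i)$ (legitimate, since the voters reach $\AA$ in the same order $\pi$, which is the behaviour $\AA$ is designed for), then run the $t$ internal block-sorts for $\pi(i)$, and only afterwards release $\pi(i)$; the reconstruction is performed once both phases have ended. The known/unknown-order distinction is carried over unchanged, by handing $\AA$ the known single crossing order of $\widehat{\PP}$ exactly when such an order is available for $\PP$.

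The step that needs the most care — rather than a genuine difficulty — is the correctness of the reconstruction, which relies entirely on the defining property that each block of the witnessing partition is a contiguous interval in every vote; one must make sure this partition is known to $\AA^\pr$, since otherwise the block-order returned by $\AA$ together with the internal orders would not pin down $\succ_\el$. The only other thing to verify is that restricting a single-crossing-width-$w$ profile to one representative per block produces a genuinely single crossing profile, which is immediate from the definition, so that the correctness and query-complexity guarantees of $\AA$ transfer to the quotient instance.
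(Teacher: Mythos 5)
Your proof is correct and follows essentially the same route as the paper's: restrict $\PP$ to one representative per block of the witnessing partition, run $\AA$ on that single crossing sub-profile at cost $\qqq(\nfrac{m}{w},n)$, and then elicit each block's internal order per voter via \Cref{obs:naive} for an additional $\OO(mn\log w)$ queries. Your added remarks on respecting the sequential access model and on the reconstruction step are sound elaborations of details the paper leaves implicit.
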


\begin{proof}
 Let the partition of the set of candidates \CC with respect to which the profile \PP is single crossing be $\bCC_i, i\in[\lceil\nfrac{m}{w}\rceil]$. Hence, $\CC = \cup_{i\in[\lceil\nfrac{m}{w}\rceil]} \bCC_i$ and $\bCC_i \cap \bCC_j = \emptyset$ for every $i, j\in[\lceil\nfrac{m}{w}\rceil]$ with $i\ne j$. Let $\CC^\pr$ be a subset of candidates containing exactly one candidate from $\bCC_i$ for each $i\in[\lceil\nfrac{m}{w}\rceil]$. We first find $\PP(\CC^\pr)$ using \AA. The query complexity of this step is $\OO(\qqq(\nfrac{m}{w},n))$. Next we find $\PP(\bCC_i)$ using \Cref{obs:naive} for every $i\in[\nfrac{m}{w}]$ thereby finding \PP. The overall query complexity is $\OO(\qqq(\nfrac{m}{w},n) + (\nfrac{m}{w})nw\log w) = \OO(\qqq(\nfrac{m}{w},n) + mn\log w)$.
\end{proof}
}
\shortversion{\vspace{-2ex}}

\section{Conclusions and Future Work}

In this paper, we have presented \PE algorithms with low query complexity for single crossing profiles under various settings. Moreover, we have proved that the query complexity of our algorithms are tight for large number of voters up to constant factors for all but one setting namely when the voters can be accessed randomly but we do not know any ordering with respect to which the voters are single crossing.\longversion{ An immediate future work is to have a matching query complexity bound for the only open case.} We do not assume any prior knowledge about the profile. It would be interesting to study the effect of prior knowledge on the query complexity for \longversion{preference} elicitation.

\longversion{
\subsubsection*{Acknowledgement} Palash Dey wishes to gratefully acknowledge support from Google India for providing him with a special fellowship for carrying out his doctoral work.
}
\longversion{\bibliographystyle{alpha}}
\shortversion{\bibliographystyle{named}}

\shortversion{
{\small \bibliography{Elicitation_Crossing}}}
\longversion{\bibliography{Elicitation_Crossing}}

\newcommand{\etalchar}[1]{$^{#1}$}
\begin{thebibliography}{BNM{\etalchar{+}}58}

\bibitem[Arr50]{arrow1950difficulty}
Kenneth~J Arrow.
\newblock A difficulty in the concept of social welfare.
\newblock {\em The Journal of Political Economy}, pages 328--346, 1950.

\bibitem[BBHH15]{brandt2015bypassing}
Felix Brandt, Markus Brill, Edith Hemaspaandra, and Lane~A Hemaspaandra.
\newblock Bypassing combinatorial protections: Polynomial-time algorithms for
  single-peaked electorates.
\newblock {\em Journal of Artificial Intelligence Research (JAIR)}, pages
  439--496, 2015.

\bibitem[BCW13]{bredereck2013characterization}
Robert Bredereck, Jiehua Chen, and Gerhard~J Woeginger.
\newblock A characterization of the single-crossing domain.
\newblock {\em Social Choice and Welfare}, 41(4):989--998, 2013.

\bibitem[BH11]{ballester2011characterization}
Miguel~A Ballester and Guillaume Haeringer.
\newblock A characterization of the single-peaked domain.
\newblock {\em Social Choice and Welfare}, 36(2):305--322, 2011.

\bibitem[BITT89]{bartholdi1989voting}
John Bartholdi~III, Craig~A Tovey, and Michael~A Trick.
\newblock Voting schemes for which it can be difficult to tell who won the
  election.
\newblock {\em Social Choice and welfare}, 6(2):157--165, 1989.

\bibitem[Bla48]{black1948rationale}
Duncan Black.
\newblock On the rationale of group decision-making.
\newblock {\em The Journal of Political Economy}, pages 23--34, 1948.

\bibitem[BM11]{barbera2011top}
Salvador Barber{\`a} and Bernardo Moreno.
\newblock Top monotonicity: A common root for single peakedness, single
  crossing and the median voter result.
\newblock {\em Games and Economic Behavior}, 73(2):345--359, 2011.

\bibitem[BNM{\etalchar{+}}58]{black1958theory}
Duncan Black, Robert~Albert Newing, Iain McLean, Alistair McMillan, and Burt~L
  Monroe.
\newblock {\em The theory of committees and elections}.
\newblock Springer, 1958.

\bibitem[CC83]{chamberlin1983representative}
John~R Chamberlin and Paul~N Courant.
\newblock Representative deliberations and representative decisions:
  Proportional representation and the borda rule.
\newblock {\em American Political Science Review}, 77(03):718--733, 1983.

\bibitem[CGS13]{cornaz2013kemeny}
Denis Cornaz, Lucie Galand, and Olivier Spanjaard.
\newblock Kemeny elections with bounded single-peaked or single-crossing width.
\newblock In {\em Proceedings of the Twenty-Third International Joint
  Conference on Artificial Intelligence (IJCAI)}, pages 76--82. AAAI Press,
  2013.

\bibitem[Con09]{Conitzer09}
Vincent Conitzer.
\newblock Eliciting single-peaked preferences using comparison queries.
\newblock {\em J. Artif. Intell. Res. {(JAIR)}}, 35:161--191, 2009.

\bibitem[Cor09]{cormen2009introduction}
Thomas~H Cormen.
\newblock {\em Introduction to algorithms}.
\newblock MIT press, 2009.

\bibitem[CS02]{conitzer2002vote}
Vincent Conitzer and Tuomas Sandholm.
\newblock Vote elicitation: Complexity and strategy-proofness.
\newblock In {\em Eighteenth National Conference on Artificial Intelligence
  (AAAI)}, pages 392--397, 2002.

\bibitem[CS05]{conitzer2005communication}
Vincent Conitzer and Tuomas Sandholm.
\newblock Communication complexity of common voting rules.
\newblock In {\em Proceedings of the 6th ACM conference on Electronic Commerce
  (EC)}, pages 78--87. ACM, 2005.

\bibitem[DKNS01]{dwork2001rank}
Cynthia Dwork, Ravi Kumar, Moni Naor, and Dandapani Sivakumar.
\newblock Rank aggregation methods for the web.
\newblock In {\em Proceedings of the 10th International conference on World
  Wide Web (WWW)}, pages 613--622. ACM, 2001.

\bibitem[DL13]{ding2013voting}
Ning Ding and Fangzhen Lin.
\newblock Voting with partial information: what questions to ask?
\newblock In {\em Proceedings of the 12th International Conference on
  Autonomous Agents and Multi-agent Systems (AAMAS)}, pages 1237--1238.
  International Foundation for Autonomous Agents and Multiagent Systems, 2013.

\bibitem[DM16]{deypeak}
Palash Dey and Neeldhara Misra.
\newblock Elicitation for preferences single peaked on trees.
\newblock In {\em Proceedings of the Twenty-Fifth International Joint
  Conference on Artificial Intelligence (IJCAI)}. AAAI Press, 2016.

\bibitem[Dod76]{dodgson1876method}
Charles~Lutwidge Dodgson.
\newblock {\em A method of taking votes on more than two issues}.
\newblock 1876.

\bibitem[EFLO15]{ElkindFLO15}
Edith Elkind, Piotr Faliszewski, Martin Lackner, and Svetlana Obraztsova.
\newblock The complexity of recognizing incomplete single-crossing preferences.
\newblock In {\em Proceedings of the Twenty-Ninth {AAAI} Conference on
  Artificial Intelligence, January 25-30, 2015, Austin, Texas, {USA.}}, pages
  865--871, 2015.

\bibitem[EFS14]{elkind2014characterization}
Edith Elkind, Piotr Faliszewski, and Piotr Skowron.
\newblock A characterization of the single-peaked single-crossing domain.
\newblock In {\em Proceedings of the 28th Conference on Artificial Intelligence
  (AAAI)}, page~12, 2014.

\bibitem[EL{\"O}08]{escoffier2008single}
Bruno Escoffier, J{\'e}r{\^o}me Lang, and Meltem {\"O}zt{\"u}rk.
\newblock Single-peaked consistency and its complexity.
\newblock In {\em Proceedings of the 18th European Conference on Artificial
  Intelligence (ECAI)}, volume~8, pages 366--370, 2008.

\bibitem[FHH14]{faliszewski2014complexity}
Piotr Faliszewski, Edith Hemaspaandra, and Lane~A Hemaspaandra.
\newblock The complexity of manipulative attacks in nearly single-peaked
  electorates.
\newblock {\em Artificial Intelligence}, 207:69--99, 2014.

\bibitem[FHHR09]{faliszewski2009shield}
Piotr Faliszewski, Edith Hemaspaandra, Lane~A Hemaspaandra, and J{\"o}rg Rothe.
\newblock The shield that never was: Societies with single-peaked preferences
  are more open to manipulation and control.
\newblock In {\em Proceedings of the 12th Conference on Theoretical Aspects of
  Rationality and Knowledge}, pages 118--127. ACM, 2009.

\bibitem[Gib73]{gibbard1973manipulation}
Allan Gibbard.
\newblock Manipulation of voting schemes: a general result.
\newblock {\em Econometrica: Journal of the Econometric Society}, pages
  587--601, 1973.

\bibitem[HM97]{hinich1997analytical}
Melvin~J Hinich and Michael~C Munger.
\newblock {\em Analytical politics}.
\newblock Cambridge University Press, 1997.

\bibitem[HSV05]{hemaspaandra2005complexity}
Edith Hemaspaandra, Holger Spakowski, and J{\"o}rg Vogel.
\newblock The complexity of kemeny elections.
\newblock {\em Theoretical Computer Science}, 349(3):382--391, 2005.

\bibitem[Kem59]{kemeny1959mathematics}
John~G Kemeny.
\newblock Mathematics without numbers.
\newblock {\em Daedalus}, 88(4):577--591, 1959.

\bibitem[Lac14]{Lackner14}
Martin Lackner.
\newblock Incomplete preferences in single-peaked electorates.
\newblock In {\em Proceedings of the Twenty-Eighth {AAAI} Conference on
  Artificial Intelligence (AAAI)}, pages 742--748, 2014.

\bibitem[LB11a]{LuB11a}
Tyler Lu and Craig Boutilier.
\newblock Robust approximation and incremental elicitation in voting protocols.
\newblock In {\em {IJCAI} 2011, Proceedings of the 22nd International Joint
  Conference on Artificial Intelligence (IJCAI)}, pages 287--293, 2011.

\bibitem[LB11b]{lu2011vote}
Tyler Lu and Craig Boutilier.
\newblock Vote elicitation with probabilistic preference models: Empirical
  estimation and cost tradeoffs.
\newblock In {\em Algorithmic Decision Theory}, pages 135--149. 2011.

\bibitem[Lev75]{levenglick1975fair}
Arthur Levenglick.
\newblock Fair and reasonable election systems.
\newblock {\em Behavioral Science}, 20(1):34--46, 1975.

\bibitem[MF14]{magiera2014hard}
Krzysztof Magiera and Piotr Faliszewski.
\newblock How hard is control in single-crossing elections.
\newblock {\em Proceedings of the 21st European Conference on Artificial
  Intelligence (ECAI)}, 2014.

\bibitem[Mir71]{mirrlees1971exploration}
James~A Mirrlees.
\newblock An exploration in the theory of optimum income taxation.
\newblock {\em The review of economic studies}, pages 175--208, 1971.

\bibitem[Mou91]{moulin1991axioms}
Hervi Moulin.
\newblock {\em Axioms of cooperative decision making}.
\newblock Number~15. Cambridge University Press, 1991.

\bibitem[PHG00]{pennock2000social}
David~M. Pennock, Eric Horvitz, and C.~Lee Giles.
\newblock Social choice theory and recommender systems: Analysis of the
  axiomatic foundations of collaborative filtering.
\newblock In {\em Proceedings of the 17th International Conference on
  Artificial Intelligence (AAAI)}, 2000.

\bibitem[PRZ08]{procaccia2008complexity}
Ariel~D Procaccia, Jeffrey~S Rosenschein, and Aviv Zohar.
\newblock On the complexity of achieving proportional representation.
\newblock {\em Social Choice and Welfare}, 30(3):353--362, 2008.

\bibitem[Rob77]{roberts1977voting}
Kevin~WS Roberts.
\newblock Voting over income tax schedules.
\newblock {\em Journal of Public Economics}, 8(3):329--340, 1977.

\bibitem[Sat75]{satterthwaite1975strategy}
Mark~Allen Satterthwaite.
\newblock Strategy-proofness and arrow's conditions: Existence and
  correspondence theorems for voting procedures and social welfare functions.
\newblock {\em Journal of Economic Theory}, 10(2):187--217, 1975.

\bibitem[ST06]{saporiti2006single}
Alejandro Saporiti and Fernando Tohm{\'e}.
\newblock Single-crossing, strategic voting and the median choice rule.
\newblock {\em Social Choice and Welfare}, 26(2):363--383, 2006.

\bibitem[SYFE13]{skowron2013complexity}
Piotr Skowron, Lan Yu, Piotr Faliszewski, and Edith Elkind.
\newblock The complexity of fully proportional representation for
  single-crossing electorates.
\newblock In {\em Symposium Algorithmic Game Theory (SAGT)}, pages 1--12.
  Springer, 2013.

\bibitem[Wal07]{walsh2007uncertainty}
Toby Walsh.
\newblock Uncertainty in preference elicitation and aggregation.
\newblock In {\em AAAI}, volume~7, pages 3--8, 2007.

\bibitem[You77]{young1977extending}
H~Peyton Young.
\newblock Extending condorcet's rule.
\newblock {\em Journal of Economic Theory}, 16(2):335--353, 1977.

\end{thebibliography}

\end{document}